\documentclass{llncs}

\usepackage{url}
\usepackage{color}
\usepackage{amssymb}
\usepackage{amsmath}
\usepackage{alltt}
\usepackage{graphicx}
\usepackage{color}
\usepackage{stmaryrd}
\usepackage{mathtools}
\usepackage{tikz}
\usetikzlibrary{matrix,arrows}
\usepackage{ascii}
\usepackage{booktabs}
\usepackage{wrapfig}

\usepackage[a4paper, margin=1in]{geometry} 

\newcommand{\efr}[1]{{\fbox{$#1$}}}  
\newcommand{\sph}[1]{\text{\rm\scriptsize [}e\text{\rm\scriptsize ]}}

\newcommand{\entails}{\vdash}
\newcommand{\bentails}{\Vvdash}
\newcommand{\entailsp}[1]{\vdash_{\!\!\!#1}}

\newcommand{\circentails}{\entailsp{{\shortrightarrow}{\shortleftarrow}}}

\newcommand{\CIRC}{{\sf CIRC}}

\newcommand{\simstar}{\stackrel{\scriptstyle{*}}{\sim}}
\newcommand{\simstarplus}{\mathop{{\simstar}{}^+}}

\def\@p#1{\mathrel{\ooalign{\hfil$\mapstochar\mkern
      5mu$\hfil\cr$#1$}}}
\def \pfun	{\@p\rightarrow}

\newcommand{\mycomment}[1]{}

\newenvironment{todo}{\bigskip\hrule\medskip\noindent}{\medskip\hrule\bigskip}
\begin{document}

\title{Circular Induction}

\author{
Dorel Lucanu\inst{1} 
\and
Grigore Ro\c{s}u\inst{2}
\and
Eugen Goriac\inst{1}
\and
Georgiana Caltais\inst{1}
}
\institute{
Faculty of Computer Science\\
Alexandru Ioan Cuza University, Ia\c{s}i, Romania,
\email{[dlucanu,egoriac]@info.uaic.ro}
\and
Department of Computer Science\\
University of Illinois at Urbana-Champaign, USA,
\email{grosu@illinois.edu}
}

\maketitle

\begin{abstract}
The Circularity Principle was successfully applied for developing a coinductive proving technique, known as circular coinduction. In this paper we show that the same principle can be used to develop an inductive proving technique. A main advantage of this uniform approach is that the two proving techniques can be easily combined during the verification process. Circular induction is simple, flexible, generic, and therefore it is a good candidate framework for combining different proving schemes into a competitive tool. We exhibit this potential by presenting how the circular induction is implemented in CIRC, a prover built around the Circularity Principle.
\\[1ex]
\textbf{Disclaimer.} This paper was written in 2010, at the time the CIRC prover was developed, and the main body reflects the state of the work and of the prover as of that date. For this arXiv technical report, only the related-work discussion (Section \ref{sec:related-work}) and the concluding section have been revised: Section \ref{sec:related-work} has been extended to situate circular induction within the cyclic-proof and infinite-descent literature that has appeared or matured since 2010. No other part of the paper—its definitions, results, proofs, examples, or implementation description—has been modified, and the technical content should be read as a 2010 contribution. References to developments after 2010 appear only in the updated related-work section.
\end{abstract}

\pagestyle{plain}

\section{Introduction}
\label{sec:intro}

Circular coinduction~\cite{DBLP:conf/icfem/LucanuR09} is a coinductive proving technique for behavioral properties. By a behavioral property we mean a  property which can be experimentally  evaluated.  The soundness of the circular coinduction can be explained, among other interpretations, by a graph whose equalities from nodes can be regarded as lemmas inferred in order to prove the original task,
and the graph itself as a dependence relation among these lemmas; one can take
all these and produce a parallel proof. We exemplify this technique proving a very simple property over streams (infinite lists). Let {\it zeros} be the streams of 0's (= $0:0:0:\ldots$), {\it ones} the stream of 1's, {\it oz} the stream $0:1:0:1\ldots$, {\it zo} the stream $1:0:1:0\ldots$, and  let $\it zip(S,S')$ be the operation which is zipping two streams. The formal definitions for these streams are given by the following corecursive equations:
\\[0.5ex]
\centerline{$
\begin{aligned}
&\it zeros = 0:zeros\\[-0.5ex]
&\it ones = 1: ones
\end{aligned}
\qquad
\it zip(a:S,S') = a:zip(S',S)
\qquad
\begin{aligned}
&\it zo = 0: oz\\[-0.5ex]
&\it oz = 1 : zo
\end{aligned}
$}\\[0.5ex]
We consider the destructors {\it hd} (head) and {\it tl} (tail) defined by $\it hd(a:S)=a$ and $\it tl(a:S)=S$. The two destructors define two {\it derivatives} $\it hd(*{:}Stream)$ and $\it tl(*{:}Stream)$, which can be seen as equation transformers: an equation $S=S'$ over streams is transformed into $\it hd(S)=hd(S')$ and $\it tl(S)=tl(S')$, respectively. The proof tree of the property $\it zip(zeros, ones)=zo$ by circular coinduction is represented in Fig.~\ref{fig:graph-cc}. The head derivative produces a new property (equation), which can be immediately proved using the definition of the operations. The tail derivative produces a new lemma, which is processed in the same way.
Again, the property produced by the head derivative is immediately proved. The tail derivative produces exactly the initial property, so there is no reason to continue because we ended into a circularity.  For the general case, the circularity is modulo a substitution and/or special contexts.
\begin{figure}
\centering
\vspace*{-6ex}
\hspace*{-2ex}
\begin{tikzpicture}
\matrix (m) [matrix of math nodes, row sep=5em,
column sep=-5em, text height=1.5ex, text depth=0.25ex]
{ 
& \it zip(zeros, ones)=zo & & \\
\begin{array}{c}
\it hd(zip(zeros, ones))=hd(zo)\\
\it 0 = 0~\checkmark
\end{array} & &
\begin{array}{c}
\it tl(zip(zeros, ones))=tl(zo)\\
\it zip(ones, zeros)=oz\\
\end{array} & \\
 & 
\begin{array}{c}
\it  hd(zip(ones, zeros))=hd(oz)\\
\it 1 = 1 \checkmark
\end{array}& &\hspace{-1ex} 
\begin{array}{c}
\\[-1ex]
\it  tl(zip(ones, zeros))=tl(oz)\\
\it  zip(zeros, ones)=zo
\end{array} & \\
};
\path[->]
(m-1-2) edge  node [above] {\it hd}  (m-2-1)
(m-1-2) edge  node [auto] {\it tl}  (m-2-3)
(m-2-3) edge  node [above] {\it hd}  (m-3-2)
(m-2-3) edge  node [auto] {\it tl}  (m-3-4)
(m-3-4) edge [bend left=-40] node[above] {} (m-1-2);
\end{tikzpicture}
\caption{\label{fig:graph-cc}Intuitive proof by circular induction} 
\end{figure}

In this paper we exploit this viewpoint in order to develop a similar inductive proving technique.
We start to explain this technique using a simple example.
Let us consider the following two definitions for even, one iterative and the other one mutual  (using odd):\footnote{This example is from Induction Challenge Problems site: \url{http://www.cs.nott.ac.uk/~lad/research/challenges/}}\\[0.5ex]
\centerline{$
\begin{aligned}
&\it even(0)=true\\[-1ex]
&\it even(s(0))=false\\[-1ex]
&\it even(s(s(N)))=even(N)
\end{aligned}
~~
\begin{aligned}
&\it evenm(0)=true\\
&\it evenm(s(N))=oddm(N)
\end{aligned}
~~
\begin{aligned}
&\it oddm(0)=false\\
&\it oddm(s(N))=evenm(N)
\end{aligned}
$}\\[0.5ex]

\begin{figure}
\centering
\begin{tikzpicture}
\matrix (m) [matrix of math nodes, row sep=5em,
column sep=-3.5em, text height=1.5ex, text depth=0.25ex]
{ 
& \it even(n)=evenm(n) & & \\
\begin{array}{c}
\it even(0)=evenm(0)\\
\it true = true~\checkmark
\end{array} & &
\begin{array}{c}
\it even(s(n_{\rm 1})) = evenm(s(n_{\rm 1}))\\
\it even(s(n_{\rm 1}))=oddm(n_{\rm 1})\\
\end{array} & \\
 & 
\begin{array}{c}
\it even(s(0))=oddm(0)\\
\it false = false \checkmark
\end{array}& & 
\begin{array}{c}
\\[-1ex]
\it even(s(s(n_{\rm 2})\!)\!) = oddm(s(n_{\rm 2})\!)\\
\it even(n_{\rm 2})=evenm(n_{\rm 2})
\end{array} & \\
};
\path[->]
(m-1-2) edge  node [above] {0}  (m-2-1)
(m-1-2) edge  node [auto] {$s(\_)$}  (m-2-3)
(m-2-3) edge  node [above] {0}  (m-3-2)
(m-2-3) edge  node [auto] {$s(\_)$}  (m-3-4)
(m-3-4) edge [bend left=-40] node[above] {$n_{\rm 2}\sim n$} (m-1-2);
\end{tikzpicture}
\caption{\label{fig:graph}Intuitive proof by circular induction} 
\end{figure}

\noindent
The proof of the conjecture $(\forall N)\it even(N)=evenm(N)$ is intuitively represented by the graph in Figure \ref{fig:graph}. The constructors of naturals, $0$ and $s(\_)$, can be seen as properties (here equalities) transformers, by applying an appropriate substitution over an inductive variable. These transformers obtained from constructors are called \emph{derivatives}. The initial goal is transformed by derivatives into two new proof obligations: $\it even(0)=evenm(0)$ and $\it even(s(n_{\rm 1})) = evenm(s(n_{\rm 1}))$.The former is reduced using the axioms from the specification and the latter is simplified to a simpler goal; this is then derived into two new proof obligations: $\it even(s(0))=oddm(0)$ and 
$\it even(s(s(n_{\rm 2})\!)\!) = oddm(s(n_{\rm 2})\!)$. Again the former is reduced using the axioms; the latter is very "similar" to initial goal (the similarity is expressed by $n_2\sim n$ meaning that "$n_2$ is an incarnation of $n$ and therefore satisfies the same properties as $n$), so there is no reason to continue the proving process, and we may conclude that property holds.
The composition of the derivatives produces \emph{experiments}. We may say that the property above was proved using only three such derivations because we ended up into a cycle. It is easy to see now that this proof technique is very similar to the one applied for circular coinduction.

In this paper we formalize the above exemplified proof technique following the line from~\cite{rosu-lucanu-2009-calco,DBLP:conf/icfem/LucanuR09}. We introduce notions like derivative, experiment, freezing operator, special hypothesis  suitable for inductive properties, and we show that the circularity principle holds for the new framework. Then we define the circular induction proof calculus based on this principle. Since both circular coinduction and circular induction are presented at a generic level, parameterized in the basic entailment relation, many proofs from circular coinduction are reused for circular induction by duality. This strengthens our belief that the circularity principle is a foundation for combining different proving paradigms into a powerful prover. We exemplify that by showing how the two mentioned proving techniques  are implemented in {\CIRC}. 
We present in Figure \ref{fig:duality} the correspondence between the similar concepts, which are expressing in fact the duality between the two paradigms. We illustrate the concepts  using streams (for circular coinduction)  and natural numbers (for circular induction), as canonical examples for two approaches.

\begin{figure}
\centering
\vspace*{-3ex}
\begin{tabular}{|l|l|l|}
\hline
 & 
Circular coinduction 
& 
Circular induction \\
\hline
sorts
& 
hidden (coinductive), 
&
inductive (initial),  \\
&
 visible (non-coinductive)
& non-inductive (loose)\\
\hline
derivatives $\Delta$ 
&
destructor (context)
& 
constructor (substitution)\\
 & 
${\it hd}(*{:}{\it Stream})$, ${\it tl}(*{:}{\it Stream})$ 
&
$0{.}{\it Nat}$, $s(N{:}{\it Nat})$\\
\hline
derived equation
&
$\delta[t]=\delta[t']$
&
$\delta_c[t]=\delta_c[t']$\\
&
$\it hd(t)=hd(t')$, $\it tl(t)=tl(t')$
&
$t[0/y{:}{\it Nat}]=t'[0/y{:}{\it Nat}]$, \\
&
&
$t[s(y{:}{\it Nat})/y{:}{\it Nat}]=t'[s(y{:}{\it Nat})/y{:}{\it Nat}]$\\
\hline
$\Delta$-experiment 
& 
visible $\Delta$-context
&
ground $\Delta$-substitution\\
\hline
freezing 
&
at top\qquad ~inhibits
&
at bottom\qquad\qquad inhibits\\
&
$t\mapsto \efr{t}$ \qquad contexts
&
$N{:}{\it Nat}\mapsto N{.}{\it Nat}$\quad ~substitutions\\
\hline
frozen equation
&
$\efr{t}=\efr{t'}$
& 
$t[y{.}s/y{:}s]=t'[y{.}s/y{:}s]$\\
\hline
\end{tabular}
\vspace*{-2ex}
\caption{\label{fig:duality}Duality between circular coinduction and circular induction}
\vspace*{-3ex}
\end{figure}

We regard circular induction as a rather complementary inductive proving technique which can contribute to the development of more powerful automated provers. The main advantages of the circular induction are its simplicity - only two inference rules, no explicit well-ordering, no algorithm for computing cover/test sets, flexibility - it can be combined with other proving techniques, e.g., circular coinduction, or disproving techniques using simplification rules~\cite{synasc09} or decision procedures, generality - it is parametric in the basic entailment relation, which allows using circular induction as framework suitable to combine more proving techniques. 

The structure of the paper is as follows. In Section~\ref{sec:prelim}
we present some preliminary notions used throughout the paper, including a brief recall of the circular coinduction in order to facilitate the comparison of the two proof techniques.
Section~\ref{sec:ind-conseq} shows how inductive consequences
can be regarded as behavioral properties.
Section~\ref{sec:ci} is dedicated to presenting the circular induction
principle and the circular induction proof calculus. In Section~\ref{sec:impl} we show how circular induction is implemented in {\CIRC} and how the two proof techniques, circular coinduction and circular induction may work together. 
Section \ref{sec:related-work} discusses the related work and Section~\ref{sec:concl} concludes the paper and outlines future work.

\section{Preliminaries}
\label{sec:prelim}

We assume the reader familiar with the basics of many sorted algebraic
specifications and only briefly recall our notation.  An algebraic
specification, or simply a {\em specification}, is a triple
$(S,\Sigma,E)$, where $S$ is a set of {\em sorts},  $\Sigma$ is a
{\em $(S^*\times S)$-signature} and $E$ is a set of
{\em $\Sigma$-equations} of the form
$(\forall X)\,t=t'{\tt ~if~}\land_{i\in I} u_i=v_i$ with
$t$, $t'$, $u_i$, and $v_i$ $\Sigma$-terms with variables in $X$,
$i=0,\ldots,n$; the pair $t$, $t'$ and each pair $u_i$, $v_i$ for
each $i\in I$, have, respectively, the same sort.  If the sort of $t$
and $t'$ is $s$ we may say that the sort of the equation is also $s$.
When $i=0$ we call the equation unconditional and omit the condition
(i.e., write it $(\forall X)\,t=t'$).  When $X=\emptyset$ we drop the
quantifier and call the equation {\em ground}.
An \textbf{equational transformer} is a mapping $\delta$ which maps a $\Sigma$-equation $e$ into a set of $\Sigma$-equations $\delta[e]$. If $\Delta$ is a set of equational transformers and $E$ a set of equations, then
$\Delta[e]=\bigcup_{\delta\in\Delta}\delta[e]$ and
$\Delta[E]=\bigcup_{e\in E}\Delta[e]$. The following introduces the notion of basic entailment relation used by both circular proof techniques.

\begin{definition}
\label{dfn:ent}
 Let $(S,(\Sigma,\Delta))$ be a signature with equational transformers. 
A \textbf{$\Delta$-contextual entailment system} is an (infix)
relation $\entails$ between sets of equations and equations,
with:
\begin{enumerate}
\item 
\textbf{(reflexivity)} $\{e\}\entails e$;
\item 
\textbf{(monotonicity)} If $E_1\supseteq E_2$ and $E_2\entails e$ then
$E_1\entails e$;
\item 
\textbf{(transitivity)} If $E_1 \entails E_2$ and
$E_2\entails e$ then $E_1 \entails e$; 
\item 
and
\textbf{($\Delta$-congruence)} If $E\entails e$ then 
$E\entails \Delta[e]$.
\end{enumerate} 

Above, $E$, $E_1$, $E_2$ range over sets of equations and $e$
over equations; we tacitly extended $\entails$ to 
sets of equations as expected:  $E_1\entails E_2$ iff
$E_1\entails e$ for any $e\in E_2$.
\end{definition}

 If ${\cal B}=(S,(\Sigma,\Delta),E)$ is a specification with equational transformers, we often write ${\cal B}\entails e$ for $E\entails e$. 

\subsection{Behavioral Coinductive Specifications and Circular Coinduction}

By behavioral properties we mean those properties which can be proved using experiments: a behavioral property $e$ holds if and only if it cannot be disproved by experiments, i.e., for each experiment $C$, the instance of $e$ corresponding to $C$, usually written as $C[e]$, holds. In this section we briefly recall the notion of behavioral specification and circular coinduction proof system; this will help us to compare and emphasize the duality of the two proof systems. For more details, the reader is invited to read \cite{rosu-lucanu-2009-calco,DBLP:conf/icfem/LucanuR09}. 

A \textbf{$\Sigma$-context} for the sort $s$ is a term $\delta[*{:}s]$ having a single occurrence of the distinguished variable $*{:}s$. If $\delta[*{:}s]$ is a context and $t$ a $\Sigma$-term of sort $s$, then $\delta[t]$ denotes the term $t$ where the variable $*{:}s$ is substituted with $t$. A $\Sigma$-context can be seen as an equational transformer, where $\delta[e]$ is the equation $(\forall X)\,\delta[t]=\delta[t']{\tt ~if~}\land_{i\in I} u_i=v_i$.
A behavioral coinductive signature $(S,(\Sigma,\Delta))$ is a signature $(S,\Sigma)$ together with a set $\Delta$ of $\Sigma$-contexts viewed as equational transformers, called \textbf{derivatives}. The derivatives $\delta[*{:}s]\in\Delta$ are also known as \textit{destructors}. A sort $s$ occurring into a derivative $\delta[*{:}s]$ is called \textbf{hidden}; the other sorts are called \textbf{visible}. Therefore, the sorts $S$ are partitioned into two subsets: the subset of hidden sorts $H$ and the subset of visible sorts $V$. The set of $\Delta$-\textbf{experiments} is inductively defined as follows: 1) any visible derivative $\delta[*{:}h]$ is an experiment for $h\in H$, and 2) if $C[*{:}h]$ is an experiment for $h$ and $\delta[*{:}h']$ is a derivative of sort $h$ (and for $h'$), then $C[\delta[*{:}h']]$ is an experiment for $h'$.
A \textbf{behavioral coinductive specification} $\cal B$ is a triple $(S,(\Sigma,\Delta),E)$, where $(S,(\Sigma,\Delta))$ is a behavioral coinductive signature and $E$ is a set of $\Sigma$-equations. In the rest of this section, $\cal B$ denotes a behavioral coinductive specification as above and $\entails$ a $\Delta$-contextual entailment relation.

\begin{definition}
\label{dfn:beh-sat}
$\mathcal B$ \textbf{behaviorally satisfies}
equation $e$, written $\mathcal B\bentails e$, iff:
$\mathcal B\entails e$ if $e$ is visible, and
$\mathcal B\entails C[e]$ for each appropriate experiment $C$ if $e$
is hidden.
\end{definition}

\begin{example}
\label{ex:streams}
The behavioral coinductive  specification \textit{STREAM} of streams
discussed in Section \ref{sec:intro}, consists of a hidden sort {\it Stream} for streams, a visible sort {\it Bit}  for data elements, a set of operations among which we point out the head $\it hd:Stream\to Bit$ and $\it tl:Stream\to Stream$, and a set of equations defining the other operations ({\it zeros}, {\it ones}, {\it zip} \ldots) in terms of head and tail. The set $\Delta$ of derivatives is $\it\{hd(*{:}Stream),tl(*{:}Stream)\}$ and
the experiments one can perform on streams are therefore contexts of
the form $\it hd(tl^{\it i}(*{:}Stream))$, where $ i\ge 0$. The  basic entailment relation is given by the conventional equational reasoning; in other words,
$\it STREAM\entails e$ iff $\it e$ is derivable using
equational reasoning from the equations in $\it STREAM$.
\end{example}

A key notion in the definition of circular
coinduction proof system is that of a ``frozen'' equation.  The motivation
underlying frozen equations is that they structurally inhibit their
use underneath proper contexts; because of that, they will allow us to
capture the informal notion of ``circular behavior''. 
Formally, the signature $\Sigma$ is extended with a new sort
{\it Frozen} and a new operation $\efr{\text{-}}: s\to {\it Frozen}$
for each sort $s$.  If $t$ is a term, then we call $\efr{t}$ the
{\em frozen (form of) $t$}. 
If $e$ is an equation $(\forall X)\,t=t'{\tt ~if~}c$, then we let
$\efr{e}$ be the {\em frozen equation}
$(\forall X)\,\efr{t}=\efr{t'}{\tt ~if~}c$; note that the condition
$c$ stays unfrozen, but recall that we only assume visible conditions.
The equations over the original signature
$\Sigma$ are called {\em unfrozen equations}.

\begin{definition}\label{cc:entail-with-freezing}
A \textbf{$\Delta$-contextual entailment system with freezing}
is a $\Delta$-contextual entailment system such
that:
\begin{quote}
\begin{enumerate}
\item[(A1)]\label{assmptn1} $E\cup{\cal F} \entails \efr{e}$ \ \ \ iff\ \ \ 
  $E\entails e$;
\item[(A2)]\label{assmptn3} $E\cup{\cal F}\entails{\cal G}$
  \ \ \ implies\ \ \ $E\cup \delta[{\cal F}]\entails\delta[{\cal G}]$
for each $\delta\in\Delta$, 
\end{enumerate}
where $E$ ranges over unfrozen equations, $e$ over visible
unfrozen equations, and $\cal F$ and $\cal G$ over frozen hidden
equations.
\end{quote}
\end{definition}

However, the freezing operator is too restrictive. There are contexts under which is safe to use the frozen hypothesis; these are called \emph{special contexts}. The formal definition is technical and can be found in \cite{DBLP:conf/icfem/LucanuR09}. Obviously, a special context does not include derivatives. For the example of streams, the contexts $\it zip(*{:}Stream,S)$ and $\it zip(S,*{:}Stream)$ are special, whereas $\it odd(*{:}Stream)$ is not special, where $\it odd:Stream\to Stream$ is defined by $\it hd(odd(S))=hd(S)$ and $tl(odd(S))=odd(tl(tl(S)))$ ( a counterexample can be found in \cite{goguen-lin-rosu-2003-wadt,DBLP:conf/icfem/LucanuR09}). Each special context $\gamma[*{:}s]$ defines an equation transformer similar to derivatives.

\begin{definition}\label{dfn:ccind} 
The \textbf{circular coinduction proof system} consists of the following inference rules, represented as conditional reduction rules:
\begin{enumerate}
\item[]
\textsf{[Reduce]}:
$({\mathcal B}, {\mathcal F}, {\mathcal G} \cup \{\efr{e}\})
\Rightarrow
({\mathcal B}, {\mathcal F}, {\mathcal G})
\texttt{\;if\;} {\mathcal B} \cup {\mathcal F} \ {\entails}\ e
$
\item[]
\textsf{[Derive]}:
$({\mathcal B}, {\mathcal F}, {\mathcal G} \cup \{\efr{e}\})
\Rightarrow
({\mathcal B}, {\mathcal F} \cup \{\efr{e}\}\cup \Gamma[e], {\mathcal G} \cup \{\efr{\Delta[e]}\})$\\
\verb##\hspace{39pt}
$\texttt{\;if\;} {\mathcal B} \cup {\mathcal F}\ {\not \entails}\ e \,\land\, e \textrm{ is
hidden}$
\end{enumerate}
where $\cal F$ a set of frozen hypotheses, $\cal G$ is a set of goals represented also as frozen equations, and $\Gamma$ is a set of special contexts. 
\end{definition}

The rule \textsf{[Reduce]} says that if the current goal is a $\entails$--consequence
of ${\mathcal B} \cup {\mathcal F}$ then $\efr{e}$ is dropped. 
When the current goal $e$ is hidden and it is not a
$\entails$--consequence, it and the equations defined by the special contexts over it are added 
to the specification and its derivatives to the set of goals using the rule \textsf{[Derive]}.
The proof of $\it zip(zeros, ones)=zo$ given in Section~\ref{sec:intro} is obtained with the above system, where all the equations should be considered frozen. Without freezing the added hypotheses, we can wrongly deduce, e.g., that $\it zeros = oz$.

\subsection{Inductive Specifications}

A \textit{signature with constructors} is a pair $(S,(\Sigma,\Sigma^{\it ctor}))$, where $(S,\Sigma)$ is a many sorted signature and $\Sigma^{\it ctor}\subseteq \Sigma$ is the subsignature of \textit{constructors}.
A sort $s$ is called \textit{inductive sort} if there is a constructor of sort $s$. 
An  \textit{inductive variable} is a variable of inductive sort. 
A \textit{goal (conjecture)} is a conditional equation written as $(\forall Y)(\forall Z)t=t'$~{\tt if}~{\it cond}, where $Y$ is a set of inductive variables.
If $\theta:Y\to T_{\Sigma^{\it ctor}}$ is a constructor ground substitution, then $\theta[e]$ denotes the equation  $(\forall\emptyset)(\forall Z)t=t'$~{\tt if}~{\it cond}.
A goal $e$ is an (equational) \textbf{inductive consequence} of the specification $(S,(\Sigma,\Sigma^{\it ctor}),E)$, if $(S,\Sigma,E)$ satisfies in the equational deduction system all constructor ground instances $\theta[e]$. More details about inductive consequences can be found, e.g., in \cite{ci:bundy}.

In order to exhibit the concepts and results presented in this paper, we consider the following examples. 

\begin{example}\label{ex:nat-sum}
 \textit{(Sum of natural numbers)}\\
The specification {\it NATSUM}, defining natural numbers, consists of a sort {\it Nat}, two constructors $0:{}\to{\it Nat}$ and $s:{\it Nat}\to{\it Nat}$, an operator $\it sum :  Nat~Nat\to Nat$ defined by the equations $\it sum(M, 0)=0$, $\it sum(M, s(N))=s(sum(M,N))$. We will exhibit how the commutativity and associativity of {\it sum} can be automatically proved with the basic circular induction proof system; two new lemmas are discovered during the proof process: $\it sum(0,M)=M$ and $sum(s(M),N)=s(sum(M,N))$. The goal expressing the associativity is 
given by the unconditional equation $\it (\forall P)(\forall M,N)sum(M, sum(N, P)) = sum(sum(M, N), P)$, where only $P$ is viewed as an inductive variable for this goal.
In order to exemplify how non-linear goals are handled, we use a {\it max} operator defined by $\it max(M,0)=M$, $\it max(0,s(N))=s(N)$, $\it max(s(M),s(N))=s(max(M,N))$, and we show that the non-linear goal  $\it max(N,N)=N$ is automatically proved in exactly one derivation step.
\end{example}

\begin{example}\label{ex:tree}
 \textit{(Generic trees)}\\
This example exhibits inductive properties over generic data types defined using the mutual recursion and the case when a constructor could have more than one inductive parameter. The specification {\it TREE} includes the inductive definition of trees built over a parameter data type {\it Elt} and where a node has a list of children. The lists are defined using a sort {\it TList} and the constructors\\[0.5ex]
\centerline{
$\it nil:{}\to TList$\qquad $[\_]:\it Tree \to TList$\qquad $\_{;}\_{}:\it TList~TList\to TList$
}\\[0.5ex] 
Note that the concatenation is defined as a constructor; it replaces the usual constructor $\it cons: Tree~TList\to TList$. We made this choice because we want to exhibit how the nonlinear conjectures are handled in more complex examples. The trees are defined using the sort {\it Tree} and the constructors\\[0.5ex]
\centerline{
$\it [\_]:Elt\to Tree$\qquad\quad $\it tr:Elt~TList\to Tree$
}\\[0.5ex]
We further consider the mirroring operation defined by\\[0.5ex]
\centerline{
$\begin{aligned}
&\it mirror(nil)=nil\\[-1ex]
&\it mirror([T])=[mirror(T)]\\[-1ex]
&\it mirror(L_1;L_2)=mirror(L_2);mirror(L_1)
\end{aligned}$
$\begin{aligned}
&\it  mirror([E]) = [E] \\[-0.5ex]
&\it mirror(tr(E, L)) = tr(E, mirror(L)) 
\end{aligned}$
}\\[0.5ex]
We show how the goal $\it mirror(mirror(L{:}TList))=L{:}TList$ is automatically proved by circular induction.  Note that the property can only be proved together with $\it mirror(mirror(T{:}Tree))=T{:}Tree$, which is automatically discovered using the help of the first goal. 
\end{example}

\begin{example}\label{ex:fib}
 \textit{(Fibonacci sequence)}\\
The specification {\it FIB} includes the definition of the Fibonacci sequence. The definition of the
natural numbers consists of a sort {\it Nat} and two constructors $0:{}\to{\it Nat}$ and $s:{\it Nat}\to{\it Nat}$. We also consider an associative and commutative operator $\it \_+\_ :  Nat~Nat\to Nat$ defined by the equations $M+ 0=0$, $\it M+ s(N))=s(M+N)$. The definition of Fibonacci sequence is given by\\[0.5ex]
\centerline{
${\it fib}(0)=0$\quad ${\it fib}(s(0)) = s(0)$\quad $\it fib(s(s(N)))=fib(s(N))+fib(N)$
}\\[0.5ex]
 and the definition of the number of calls of $+$ is given by\\[0.5ex]
\centerline{
${\it np}(0)={\it np}(s(0))=0$\qquad\quad ${\it np(s(s(N)))=np(s(N)) + np(N)}+s(0)$
}\\[0.5ex]
We exhibit how the goal $\sum_{k=0}^N{\it fib}(k)=np(s(N))$ can be proved with the enhanced circular induction proof system implemented in {\CIRC}.
\end{example}

\section{Inductive Consequences are Behavioral Properties}
\label{sec:ind-conseq}

 In this section we show that the inductive consequences can be viewed as behavioral properties, i.e., we can define notions like experiments and derivatives, which satisfy the same properties as for the coinductive case, and define the inductive satisfaction similarly to behavioral coinductive satisfaction.

\begin{definition}
Let $(S,(\Sigma,\Sigma^{\it ctor}))$ be a signature with constructors. Each constructor $c\in\Sigma^{\it ctor}$ defines a \textbf{derivative} (equational transformer) $\delta_c=\{\delta_{c,y}\}$ as follows: if $e$ is $(\forall Y)(\forall Z)t=t'$~if~{\it cond} then
\begin{itemize}
\item $\delta_{c,y}[e]$ is $(\forall Y\setminus\{y\})(\forall Z)t[c/y]=t'[c/y]$~if~{\it cond}$[c/y]$ for each $y\in Y$ and constructor constant $c$ having the same sort as that of $y$;
\item $\delta_{c,y}[e]$ is $(\forall Y\setminus\{y\}\cup\{y_1,\ldots,y_n\})(\forall Z)t[u/y]=t'[u/y]$~if~{\it cond}$[u/y]$ for each $y\in Y$ of sort $s$ and $c:s_1\ldots s_n\to s$, where $u=c(y_1,\ldots,y_n)$ and $y_i$ is a fresh variable of sort $s_i$ for $i=1,\ldots,n$; in this case we write $y_i\sim y$ for each $i$ such that $s_i=s$ ($y_i$ can be seen as a new \emph{incarnation} of $y$);
\item $\delta_c[e]=\{\delta_{c,y}[e]\mid y\in Y\land {\it sort}(c)={\it sort}(y)\}$ for each $c\in\Sigma^{\it ctor}$.
\item $\Delta=\cup_{c\in \Sigma^{\it ctor}}\delta_c$.
\end{itemize}
\end{definition}

\begin{example}
Let us consider the specification {\it TREE} defined in Example \ref{ex:tree}.
If $e$ is $\it mirror(mirror(T:Tree))=T:Tree$ and $c$ is $\it tr(E{:}Elt, L{:}TList)$, then $\delta_c[\efr{e}]$ consists of $\it mirror(mirror(tr(E,L)))=tr(E,L)$.
If $\efr{e}$ is $\it mirror(mirror(L:TList))=L$ and $c$ is $\it  L_{\rm 1}{:}TList;L_{\rm 2}{:}TList$, then $\delta_c[\efr{e}]$ consists of ${\it mirror(mirror(L_{\rm 1};L_{\rm 2}))}=L_1;L_2$ and $L\sim L_1$ and $L\sim L_2$.
\end{example}

The relation $\sim$ between frozen variables plays also a key role in the definition of the circular induction proof system.

\begin{proposition}\label{prop:exprm}
Let $e$ be an equation $(\forall Y)(\forall Z)t=t'$~if~{\it cond} and let $\theta : Y\to {\cal T}_{\Sigma^{\it ctor}}$ be a ground $\Sigma^{\it ctor}$-substitution. Then there exist certain constructors (not necessarily distinct) $c_1,\ldots, c_n$ and inductive variables $y_1,\ldots,y_n$ such that 
$Y\subseteq\{y_1,\ldots,y_n\}$ and $\theta[e]=\delta_{c_1,y_1}[\ldots\delta_{c_n,y_n}[e]\ldots]$. 
\end{proposition}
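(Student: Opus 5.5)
We argue by induction on the total size of the ground constructor terms in the image of $\theta$, that is, on $\|\theta\|=\sum_{y\in Y}|\theta(y)|$, where $|u|$ counts the constructor symbols in $u$. The idea is to peel off one constructor at a time. Each peeling step is one derivative $\delta_{c,y}$, and this produces the required sequence $c_1,y_1,\ldots,c_n,y_n$. The sequence is built outermost-first, so that $\delta_{c_1,y_1}$ is the innermost application, i.e.\ the first one applied to $e$. We treat $Y$ as a finite set, which holds because equations have finitely many variables. If $Y=\emptyset$ there is nothing to substitute and $\theta[e]=e$; we read this as the degenerate case $n=0$, or alternatively we exclude it when $Y$ is nonempty.

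For the inductive step, pick some $y\in Y$ and write $\theta(y)=c(u_1,\ldots,u_k)$ with $c\in\Sigma^{\it ctor}$ of sort ${\it sort}(y)$ and each $u_i$ ground.

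If $k=0$, so that $c$ is a constructor constant, then $\delta_{c,y}[e]$ is the equation $(\forall Y\setminus\{y\})(\forall Z)\,t[c/y]=t'[c/y]$~if~${\it cond}[c/y]$. Let $\theta'$ be the restriction of $\theta$ to $Y\setminus\{y\}$. Then $\theta'[\delta_{c,y}[e]]=\theta[e]$, because applying a substitution to disjoint variables composes.

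If $k>0$, then $\delta_{c,y}[e]$ introduces fresh inductive variables $y'_1,\ldots,y'_k$. (Strictly, only those $y'_i$ of inductive sort belong to the new $Y$. Every constructor argument of inductive sort is handled by the derivative; a non-inductive argument of a constructor would have to be a ground constructor term itself, and we assume it is placed in $Z$ or already ground.) Define $\theta'$ on $(Y\setminus\{y\})\cup\{y'_1,\ldots,y'_k\}$ by $\theta'(y'_i)=u_i$ and $\theta'=\theta$ on the other variables. Since the $y'_i$ are fresh, we get $\theta'[\delta_{c,y}[e]]=\theta[e]$. This holds because $(t[c(y'_1,\ldots,y'_k)/y])\theta'=t\theta$, and the same identity holds for $t'$ and for ${\it cond}$.

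In both cases $\|\theta'\|=\|\theta\|-1$. The induction hypothesis, applied to $\delta_{c,y}[e]$ and $\theta'$, supplies constructors and variables $c_2,\ldots,c_n$ and $y_2,\ldots,y_n$ with $\theta'[\delta_{c,y}[e]]=\delta_{c_n,y_n}[\ldots\delta_{c_2,y_2}[\delta_{c,y}[e]]\ldots]$. Setting $c_1=c$ and $y_1=y$ completes the sequence in the indexing of the statement.

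The inclusion $Y\subseteq\{y_1,\ldots,y_n\}$ holds because the induction continues until the set of inductive variables is empty, and every original $y\in Y$ must be eliminated at some step. When the process terminates, no inductive variable remains, and the final equation has the form $(\forall\emptyset)(\forall Z)\ldots$, which is exactly $\theta[e]$.

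The main obstacle is bookkeeping rather than mathematics. First, the order of applications in the statement's nested notation has to be matched with the order in which we peel off constructors. Second, the substitution-composition identity must be checked carefully in the presence of the fresh variables and the conditional part ${\it cond}$. A non-inductive constructor argument, such as ${\it Elt}$ in $\it tr$, requires the caveat noted above: such arguments are simply not touched by the derivatives.
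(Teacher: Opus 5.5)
Your proof is correct and is essentially the paper's. The paper defines ${\it der}(\theta)[e]$ recursively: it peels the top constructor of $\theta(y)$ with $\delta_{c,y}$, passes the arguments $u_i$ to the fresh variables, and handles $Y=Y'\cup\{y\}$ one variable at a time, which is your induction on $\|\theta\|$ with a fixed choice order; your version additionally makes the termination measure and the inclusion $Y\subseteq\{y_1,\ldots,y_n\}$ explicit.

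Two points are cosmetic only. Your final nesting is the statement's with the indices reversed, so you just relabel. The caveat about non-inductive constructor arguments is vacuous: each $u_i$ is a ground constructor term, so its sort has a constructor and is therefore inductive by definition.
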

\begin{proof}
Let ${\it der}(\theta)[e]$ denote the equation inductively defined as follows:
\begin{itemize}
\item if $Y=\{y\}$ and $\theta(y)$ is a constant $c$, then ${\it der}(\theta)[e]$ is $\delta_{c,y}[e]$;
\item if $Y=\{y\}$ and $\theta(y)=c(u_1,\ldots,u_n)$, then ${\it der}(\theta)[e]$ is ${\it der}(\theta')[\delta_{c,y}[e]]$, where $\theta':\{y_1,\ldots,y_n\}\to{\cal T}_{\Sigma^{\it ctor}}$ and  $\theta'(y_i)=u_i$ for $i=1,\ldots,n$;
\item  if $Y=Y'\cup \{y\}$, then ${\it der}(\theta)[e]$ is ${\it der}(\theta|_{Y})[{\it der}(\theta|_{\{y\}})[e]]$.
\end{itemize}
It is easy to check that $\theta[e]$ and ${\it der}(\theta)[e]$ are the same and ${\it der}(\theta)[e]$ has the shape asked by the conclusion.
\qed
\end{proof}

For instance, if $e$ is $(\forall N){\it max}(N,N)=N$ and $\theta(N)=s(0)$, then $\theta[e]$ is the same with $\delta_0[\delta_s[e]]$.

Proposition~\ref{prop:exprm} says that the ground $\Sigma^{\it ctor}$-substitutions are in fact {experiments} and can be inductively defined in a similar way to the coinductive case.
\begin{definition}
An \textbf{experiment} for the inductive variable $y$ is defined as follows: 
\begin{itemize}
\item if $c\in \Sigma^{\it ctor}$ of sort $s$ and $y$ is an inductive variable of sort $s$, then
$\delta_{c,y}[\bullet]$ is an experiment for $y$ (it designates the substitution $y\mapsto c$);
\item if $(c:s_1\ldots s_n\to s)\in \Sigma^{\it ctor}$,  $y$ is an inductive variable of sort $s$, and $\theta_i$ is an experiment for $y_i$ of sort $s_i$ (designating the substitution $y_i\mapsto t_i$), then $[\theta_1,\ldots,\theta_n][\delta_{c,y}[\bullet]]$ is an experiment for $y$ (designating the substitution $y\mapsto c(t_1,\ldots,t_n)$;
\item an experiment for a set of inductive variable $Y$ consists of a experiment $\theta$ for each $y\in Y$.
\end{itemize}
The notation $[\theta_1,\ldots,\theta_n][\bullet]$ is a shorthand for $\theta_1[\ldots[\theta_n[\bullet]]\ldots]$ (or an equivalent one because the order we compose $\theta_i$ does not matter).
\end{definition}
It is worth noting to emphasize the duality between the two notions of experiment: for the coinductive case it is a context for a hidden sort, for the inductive case it is a substitution for a set of inductive sorts.
The inductive definition of the experiments and their view as equation transformers are fully exploited by circular induction proof system we define later.
From now on we often write $(S,(\Sigma,\Delta))$ for $(S,(\Sigma,\Sigma^{\it ctor}))$ in order to have a uniform notation and to stress that a signature with constructors defines a set of equational transformers. This double notation is not confusing because if we know $\Sigma^{\it ctor}$ then we can compute $\Delta$ and, reciprocally, if we know $\Delta$ then we can compute $\Sigma^{\it ctor}$ and $[\bullet]$ is a placeholder for equations.
Having defined $\Delta$, it does make sense to consider $\Delta$-contextual entailment systems $\entails$ for the signature with constructors, and define inductive consequences parameterized over these systems, similar it is done for coinductive behavioral consequences.

\begin{definition}
Let ${\cal B}=(S,(\Sigma,\Delta),E)$ be a specification with constructors and let $e$ be an equation $(\forall Y)(\forall Z)t=t'$~if~{\it cond}.
We say that $\cal B$  \textbf{inductively satisfies} $e$, written ${\cal B}\bentails e$, if ${\cal B}\entails \theta[e]$ for each experiment $\theta$ for $Y$.
\end{definition}

\section{Circular Induction}
\label{sec:ci}

Similar to circular coinduction, a key notion for circular induction is that of frozen equation.  If for the case of circular coinduction, the frozen equations inhibit the use of the frozen equations "underneath" proper contexts, for circular induction the frozen equation inhibit their  use "over" proper substitution. The frozen equations help us to capture the informal notion of "circular inductive reasoning" elegantly, rigorously, and generally (modulo a restricted form of equational reasoning).

\begin{definition}\label{def:freezing}
1. A \textbf{frozen form of a variable} $y{:}s$ is a distinguished constant $y{.}s$ of sort $s$.\\
2. A \textbf{frozen form of a goal} $e\equiv(\forall Y)(\forall Z)t=t'$~if~{\it cond} is the equation $\efr{e}\equiv (\forall \emptyset)(\forall Z)t=t'$~if~{\it cond}, where the inductive variables $Y$ are replaced with their frozen forms.\\
3. The set of frozen variables of a goal $e$ is ${\it FVar}(\efr{e})=\{y{.}s\mid y{:}s\in Y\}$.\\
4. If $F$ is a set of goals, then $\efr{F}=\{\efr{e}\mid e\in F\}$ and ${\it FVar}(\efr{F})=\cup_{ e\in F}{\it FVar}(\efr{e})$.
\end{definition}

Let $\delta_c$ be a derivative, $e$ a goal, and $F$ a set of goals . Then we write $\delta_{c,y}[\efr{e}]$ for $\efr{\delta_{c,y}[e]}$, $\delta_c[\efr{e}]$ for $\efr{\delta_c[e]}$, $\Delta[\efr{e}]$ for $\efr{\Delta[e]}$, and $\Delta[\efr{F}]$ for $\efr{\Delta[F]}$.
If $\theta$ is an experiment and $\Theta$ is a set of experiments, then  $\theta[\efr{e}]=\efr{\theta[e]}$, $\theta[\efr{F}]=\{\theta[\efr{e}]\mid e\in F\}$, and $\Theta[\efr{F}]=\cup_{\theta\in\Theta}\theta[\efr{F}]$.

Let $(S,\Sigma,E)\cup\efr{F}\entails\efr{e}$ denote the entailment relation  $(S,\Sigma(Y_F\cup Y_e),E\cup\efr{F})\entails\efr{e}$, where $Y_F$ is the set of the all designated inductive variables of the equations $F$, $Y_e$ is the set of the all designated inductive variables of $e$, and
$\Sigma(Y)$ is the signature $\Sigma$ enriched with the frozen forms of the variables $Y$ (recall that these are constants).
The following is an adaptation of Definition~\ref{cc:entail-with-freezing} for the case of derivatives defined by constructors.
\begin{definition}\label{def:ent-with-freezing}
Let $(S,(\Sigma,\Delta))$ be a signature together with the associated derivatives. 
A \textbf{$\Delta$-contextual entailment system with freezing} is a $\Delta$-contextual entailment $\entails$ satisfying the following two assumptions:\\
(A1) $(S,\Sigma,E)\cup\efr{F}\entails \efr{e}$ iff $(S,\Sigma,E)\entails e$ if $\efr{e}$ is an equation without frozen variables.\\
(A2)  if $(S,\Sigma,E)\cup\efr{F}\entails \efr{G}$ then $(S,\Sigma,E)\cup\delta_c[\efr{F}]\entails\delta_c[\efr{G}]$.
\end{definition}

If $\entails$ is the equational deduction, then the axioms (A1) and (A2) are satisfied. If $e$ has no inductive variables then obviously no axiom from $\efr{F}$ is used in the inference of $\efr{e}$. The proof of a goal $\efr{g}[\delta_c/y.s]\in\delta_c[\efr{G}]$ from $(S,\Sigma,E)\cup\delta_c[\efr{F}]$ is obtained from a proof of $g$ from $(S,\Sigma,E)\cup\efr{F}$ by replacing each equation $e$ from this proof with $e[\delta_c/y.s]$.


\begin{theorem}\textbf{(circular induction principle)}\\
\label{thm:icp}
Let ${\cal B}=(S,(\Sigma,\Delta),E)$ be a specification and $F$ be a set of goals. If ${\cal B}\cup\efr{F}\entails \Delta[\efr{F}]$ then ${\cal B}\bentails F$.
\end{theorem}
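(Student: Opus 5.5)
The plan mirrors the proof of the coinductive circularity principle: induct on the structure of experiments. By Proposition~\ref{prop:exprm} and the definition of experiments, every ground constructor substitution $\theta$ for the inductive variables of a goal is a finite composition $\delta_{c_1,y_1}[\ldots\delta_{c_n,y_n}[\bullet]\ldots]$ of derivatives. So it suffices to prove, for every $n\ge 0$ and every composition $\theta$ of $n$ derivatives,
\[{\cal B}\cup\theta[\efr{F}]\entails\theta[\Delta[\efr{F}]].\]
We read $\theta$ applied to a set as applying the corresponding instance to each goal where the variable occurs, and leaving the goal unchanged otherwise. The base case $n=0$ is exactly the hypothesis. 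For the step from $\theta$ to $\delta_c\circ\theta$, apply (A2) to the statement for $\theta$, with $\efr{F}$ replaced by $\theta[\efr{F}]$ and $\efr{G}$ by $\theta[\Delta[\efr{F}]]$. This relies on the fact that derivatives on frozen goals act by substituting the frozen constant $y{.}s$ by $c(y_1{.}s_1,\ldots)$, so composition behaves as expected.

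Next I combine the chain. Since $\Delta[\efr{F}]$ contains all one-step derivatives of goals in $F$, the chain gives
\[{\cal B}\cup\theta[\efr{F}]\entails\theta'[\efr{F}]\]
for every $\theta'$ extending $\theta$ by one derivative. Iterating with transitivity and monotonicity gives ${\cal B}\cup\efr{F}\entails\theta[\efr{F}]$ for every composition $\theta$ of length $\ge 1$; more precisely, ${\cal B}\cup\efr{F}\entails\Delta^{k}[\efr{F}]$ for all $k\ge1$, by induction on $k$.

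Now take a ground experiment $\theta$ for $Y$ and a goal $e\in F$. Then $\theta[\efr{e}]$ has no frozen variables, since all inductive variables have been replaced by ground constructor terms, and $\theta[\efr{e}]=\efr{\theta[e]}$. From ${\cal B}\cup\efr{F}\entails\efr{\theta[e]}$, assumption (A1) yields ${\cal B}\entails\theta[e]$. Hence ${\cal B}\bentails e$ for every $e\in F$. For goals with no inductive variables the statement is trivial via (A1) directly.

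The main obstacle is the bookkeeping in the inductive step. (A2) is stated for a single derivative $\delta_c$ applied uniformly to both $\efr{F}$ and $\efr{G}$. However, different goals in $F$ have different frozen variables, and the incarnation relation $\sim$ must be respected so that derivatives commute with the composition of experiments. I would handle this by working throughout with sets of frozen goals closed under the chosen derivatives, and by showing that a derivative applied to a goal not containing the variable leaves it unchanged. This keeps $\theta[\efr{F}]\supseteq\efr{F}$-instances available, so that monotonicity can be used. The ordering of composition of independent substitutions, noted after the definition of experiments, handles the multi-variable case.
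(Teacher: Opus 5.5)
Your proposal is correct and follows the same route as the paper. The paper's proof carries over the coinductive circularity-principle argument by treating non-ground constructor substitutions as contexts and doing well-founded induction on their composition order; your induction on the length of derivative compositions (pushing the hypothesis through (A2), chaining by transitivity and monotonicity, and finishing with (A1) at ground instances) is an explicit version of that argument, and your reading of $\delta_{c,y}$ on a set of frozen goals as a uniform substitution of $y{.}s$, leaving goals without $y{.}s$ unchanged, matches how the paper justifies (A2) for equational deduction.
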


\begin{proof}
Since both the circular coinduction and circular induction are defined over entailment relations and dual freezing operators satisfying the same properties, we may reuse the proof of Theorem 2 given in \cite{rosu-lucanu-2009-calco}, by defining a dual definition for contexts and a well-founded order over them. For circular coinduction, contexts are defined in the same way as experiments, but they could have arbitrary sorts, including hidden. Dually, a "context" for circular induction is a $\Delta$-substitution $\theta:Y\to {\cal T}_{\Sigma^{\it ctor}}(Y')$ defined in the same way as the experiments but removing the constraint of being ground. If $\theta:Y\to {\cal T}^{\Sigma^{\it ctor}}(Y')$ and $\theta':Y'\to {\cal T}^{\Sigma^{\it ctor}}(Y'')$ are two contexts for $Y$ and respectively $Y'$, then $\theta'[\theta]:Y\to {\cal T}^{\Sigma^{\it ctor}}(Y'')$ is the context defined by $\theta'[\theta](y)=\theta'(\theta(y))$ for each $y\in Y$, where we implicitly assumed $\theta$ extended over terms. We consider the well-founded order over contexts given by $\theta\succ\theta'$ iff there is $\theta''$ such that $\theta=\theta'[\theta'']$ or $\theta=\theta''[\theta']$. Now, the conclusion of Theorem~\ref{thm:icp} follows by performing the same steps as in the proof of Theorem 2 in \cite{rosu-lucanu-2009-calco}.
\end{proof}

The above proof shows  that the circularity mechanism is abstracted from both implementations, of the entailment relation and of the freezing operator; it depends only on some assumptions these two concepts have to satisfy. However, the duality relationship between the corresponding concepts is not obvious; a direct proof of the above theorem for the case when the basic entailment is equational deduction is tedious and very long.

Moreover, the result above can be extended using the mechanism of the special hypotheses~\cite{DBLP:conf/icfem/LucanuR09}. We introduce first some notations. If $\theta$ is an experiment for $Y$, then $|\theta|=\max\{|\theta(y)|\mid y\in Y\}$, where $|t|$ denotes the depth of the term $t$. Let $\theta^\leq$ denote the set of experiments (for $Y$) $\theta'$ with $|\theta'|\leq|\theta|$. 

\begin{definition}
\label{def:spechyp}
\textbf{(special hypotheses)}
Let ${\cal B}=(S,(\Sigma,\Delta),E)$ be a specification and $F$ a
set of goals.  The \textbf{special-hypothesis closure} of $F$ is the set $F^\leq$ satisfying: 1) ${\cal B}\bentails \theta^\leq[\efr{F}]$ implies ${\cal B}\entails \theta[\efr{F^\leq}]$, 2) $F_1\subseteq F_2$ implies $F^\leq_1\subseteq F^\leq_2$, and  3) $(F^\leq)^\leq=F^\leq$.\\
An equation $e$ is \textbf{special} for $F$ if $e\in F^\leq$.
\end{definition}

The first condition is general enough to include many induction schemes, e.g., strong induction (see Example~\ref{ex:succ}), and the third condition says that $F^\leq$ is maximal.
The following result is similar to Theorem~3 in~\cite{DBLP:conf/icfem/LucanuR09} and shows that the special hypotheses can be used as lemmas for free; its proof is similar to Theorem~\ref{thm:icp}. 

\begin{theorem}\label{thm:icp-ext}
\textbf{(extended inductive circularity principle)}

Let ${\cal B}=(S,(\Sigma,\Delta),E)$ be a specification.
If $F$ is a set of goals such that
$\mathcal B\cup \efr{F^\leq}\entails \Delta[\efr{F}]$,
then ${\cal B}\bentails F^\leq$.
\end{theorem}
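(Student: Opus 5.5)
We argue by well-founded induction on the depth $|\theta|$ of experiments, proving the strengthened statement: for every experiment $\theta$ for the inductive variables of the goals, ${\cal B}\entails\theta[\efr{F^\leq}]$. Since $\theta[\efr{e}]=\efr{\theta[e]}$ has no frozen variables when $\theta$ is an experiment (ground), assumption (A1) of Definition~\ref{def:ent-with-freezing} turns this into ${\cal B}\entails\theta[e]$ for every $e\in F^\leq$, i.e.\ ${\cal B}\bentails F^\leq$. By condition 1) of Definition~\ref{def:spechyp}, it suffices to show ${\cal B}\bentails\theta^\leq[\efr{F}]$, i.e.\ ${\cal B}\entails\theta'[\efr{F}]$ for all $\theta'$ with $|\theta'|\le|\theta|$. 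So the real induction hypothesis is:
\begin{quote}
(IH$_k$) ${\cal B}\entails\theta[\efr{F}]$ for every experiment $\theta$ with $|\theta|\le k$.
\end{quote}
By the closure property 1), (IH$_k$) also yields ${\cal B}\entails\theta[\efr{F^\leq}]$ for all $|\theta|\le k$.

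\textbf{Step case.} Take an experiment $\theta$ of depth $k+1$. By Proposition~\ref{prop:exprm} and the inductive definition of experiments, $\theta$ factors as $\theta=\theta'[\delta_{c,y}[\bullet]]$: first one derivative $\delta_{c,y}$, then a remaining experiment $\theta'$ on the incarnation variables, with $|\theta'|\le k$. Start from the hypothesis ${\cal B}\cup\efr{F^\leq}\entails\Delta[\efr{F}]$, in particular ${\cal B}\cup\efr{F^\leq}\entails\delta_{c,y}[\efr{F}]$. Push $\theta'$ through this entailment, one derivative at a time, using (A2) iterated along the structure of $\theta'$. This gives ${\cal B}\cup\theta'[\efr{F^\leq}]\entails\theta'[\delta_{c,y}[\efr{F}]]=\theta[\efr{F}]$. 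The left-hand extra hypotheses are instances by a shorter experiment, so (IH$_k$) together with property 1) gives ${\cal B}\entails\theta'[\efr{F^\leq}]$. Transitivity and monotonicity of $\entails$ then yield ${\cal B}\entails\theta[\efr{F}]$.

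\textbf{Base case.} Experiments of depth $1$ consist only of constant constructors. Here $\theta=\delta_{c,y}[\bullet]$ with $\theta'$ trivial, and the hypothesis gives ${\cal B}\cup\efr{F^\leq}\entails\theta[\efr{F}]$. The equations of $\efr{F^\leq}$ carry frozen variables that do not occur in the ground goal, so (A1) discharges them and gives ${\cal B}\entails\theta[\efr{F}]$. The same (A1) argument lets us drop any leftover frozen hypotheses in the step case, since the target is ground.

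The main obstacle is the bookkeeping that makes (A2) applicable in the step case. Applying $\theta'$ to the hypotheses $\efr{F^\leq}$ must only involve the frozen variables that $\theta'$ actually instantiates, namely the incarnations $y_i\sim y$. The other frozen variables of $F^\leq$ have to be instantiated independently, possibly by experiments of arbitrary depth, and those could exceed $k$. I would handle this as in the proof of Theorem~\ref{thm:icp}, via the order $\succ$ on contexts. The induction is on a multiset or well-founded ordering of the whole experiment, not merely its maximum depth, and we check that every hypothesis instance used is strictly $\prec\theta$. Where needed, the special closure condition 1) is applied with $\theta^\leq$ measured by $\max$ depth so that mixed-depth instances remain covered. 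If this fails, the fallback is to quantify (IH) over all experiments $\prec\theta$ in the order $\succ$ and re-derive the depth bound from it.
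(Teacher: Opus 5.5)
There is a genuine gap in your step case, and it is the crux of the proof rather than bookkeeping. Since $|\theta|$ is a maximum over the inductive variables, removing one derivative $\delta_{c,y}$ need not lower it. Take $\theta=\{y\mapsto t,\ z\mapsto u\}$ with $|t|=|u|=k+1$. The residual $\theta'$ still sends $z$ to $u$, so $|\theta'|=k+1$, not $\le k$. Your base case has the same defect: a depth-$1$ experiment for $\{y,z\}$ consists of two derivatives, not one.

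To discharge $\theta'[\efr{F^\leq}]$ via condition 1) of Definition~\ref{def:spechyp}, you would need $\mathcal B\entails\theta''[\efr{F}]$ for every $\theta''$ of depth $k+1$. That set includes $\theta$ itself, so the argument is circular. Your fallback does not repair this. An induction along $\succ$, or along a multiset or total-size order, supplies only the instances strictly below $\theta$. Condition 1) instead asks for all of ${\theta'}^{\leq}$, which is cut off by maximum depth and is not contained in $\{\theta''\mid\theta''\prec\theta\}$. The paper's pointer to the proof of Theorem~\ref{thm:icp} does not supply the step either. There the only hypotheses are $\efr{F}$, and their partial instances are discharged directly by the $\succ$-induction, not through a depth-bounded closure condition.

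In fact no bookkeeping can close the gap, because with the maximum-depth $\theta^{\leq}$ the statement fails for two inductive variables. Take $\mathit{Nat}$ with constructors $0,s$, a constant $\mathit{true}$, and $p:\mathit{Nat}\,\mathit{Nat}\to\mathit{Bool}$ with equations $p(0,W)=p(W,W)$, $p(s(X),W)=p(W,W)$, $p(X,0)=p(X,X)$, $p(X,s(W))=p(X,X)$. Let $F=\{p(Y,Z)=\mathit{true}\}$ with $Y,Z$ inductive and frozen forms $y,z$.

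\begin{itemize}
\item \emph{The closure is admissible.} Closing a set of goals under identification of inductive variables is monotone and idempotent. It satisfies 1), because $\theta[p(Z,Z)=\mathit{true}]$ is the instance of $F$ at $\{Y\mapsto\theta(Z),Z\mapsto\theta(Z)\}$, whose depth is at most $|\theta|$. So $F^\leq$ may contain $p(Z,Z)=\mathit{true}$ and $p(Y,Y)=\mathit{true}$.
\item \emph{The premise holds.} Each derived goal $p(0,z)=\mathit{true}$, $p(s(y_1),z)=\mathit{true}$, $p(y,0)=\mathit{true}$, $p(y,s(z_1))=\mathit{true}$ follows directly from one equation of $\mathcal B$ plus the frozen hypothesis $p(z,z)=\mathit{true}$ or $p(y,y)=\mathit{true}$. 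This still holds for the entailment in which a frozen hypothesis may be used only for goals containing all of its frozen constants, and that entailment satisfies (A1) and (A2).
\item \emph{The conclusion fails.} $\mathcal B\not\entails p(0,0)=\mathit{true}$: take a two-element $\mathit{Bool}$ and let $p$ be constantly the element different from $\mathit{true}$.
\end{itemize}

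Your argument breaks exactly here. For $\theta=\{Y\mapsto 0,Z\mapsto 0\}$, the hypothesis instance left to discharge is $p(0,0)=\mathit{true}$, which is the goal itself.

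What is missing is a change in the hypothesis, not a better proof. One option is to measure $\theta^{\leq}$ by an order that strictly decreases when a single derivative is removed, such as total size or $\succ$. The other is to restrict to a single inductive variable, where depth does drop by one and your scheme (push with (A2), drop leftover frozen hypotheses with (A1), discharge instances with the closure condition) has the right shape.
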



\begin{definition}
\label{def:ci}
The \textbf{circular induction proof system} consists of the following inference rules, represented as conditional reduction rules:
\begin{align*}
& {\sf [Reduce]}\\
&\quad ({\cal B},{\cal F},{\cal G}\cup\{\efr{e}\})\Rightarrow ({\cal B},{\cal F},{\cal G})
\quad{\rm~if~}{\cal B}\cup{\cal F}\entails \efr{e}\\
& {\sf [Derive]}\\
&\quad ({\cal B},{\cal F},{\cal G}\cup\{\efr{e}\})\Rightarrow ({\cal B},{\cal F}\cup{\cal H}(\efr{e},{\it FVar}(\Delta[\efr{e}])),{\cal G}\cup\Delta[\efr{e}])\quad {\rm~if~}{\cal B}\cup{\cal F}\not\entails \efr{e}
\end{align*}
where ${\cal H}(\efr{e}, FV)=\{\efr{e}[y'/y]\mid y'\sim y, y'\in FV, y\in{\it FVar}(\efr{e})\}$
\end{definition}

The rule \textsf{[Reduce]} eliminates a goal whenever it can be deduced using the the basic entailment system. The key role is played by \textsf{[Derive]} rule: if a goal cannot be deduced with the initial entailment relation, it and the equations "similar" to it are added as frozen hypotheses and its derived goals are added as new proof obligations.

\begin{example}(\textit{successful examples})\\
Assume  that $\cal B$ is the specification {\it NATSUM} defined in Example \ref{ex:nat-sum} and that $\cal G$ consists of the goal $\it sum(m,n)=sum(n,m)$, where $m$ is the frozen form of $\it M{:}Nat$ and $n$ is the frozen form of $\it N{:}Nat$. We also assume that the basic entailment relation $\entails$ is the equational deduction. The initial $\cal F$ is empty. The derivation tree  of the for proving the above goal is described by the following table:
\begin{center}
\begin{align*}
& \cal F &&\cal G &&\rm Rule\\
\hline
1.
&
\emptyset 
&&
\it sum(m, n)=sum(n, m)
&&
\textsf{[Derive]}\\
\hline
2.
&
\begin{aligned}
  &\it sum(m, n)=sum(n, m)\\[-1ex]
  &\it sum(m, n_{\rm 1})=sum(n_{\rm 1}, m)
\end{aligned}
&&
\begin{aligned}
  &\it sum(m, 0) = sum(0, m)\\[-1ex]
  &\it sum(m, s(n_{\rm 1})) = sum(s(n_{\rm 1}), m)
\end{aligned}
&&
\textsf{[Derive]}\displaybreak[0]\\
\hline
3.
&
\begin{aligned}
  &\it sum(m, n)=sum(n, m)\\[-1ex]
  &\it sum(m, n_{\rm 1})=sum(n_{\rm 1}, m)\\[-1ex]
  &\it m_{\rm 1} = sum(0, m_{\rm 1})
\end{aligned}
&&
\begin{aligned}
  &\it 0 = sum(0, 0) \\[-1ex]
  &\it s(m_{\rm 1}) = sum(0, s(m_{\rm 1})) \\[-1ex]
  &\it sum(m, s(n_{\rm 1})) = sum(s(n_{\rm 1}), m)
\end{aligned}
&&
\textsf{[Reduce]}\displaybreak[0]\\
\hline
4.
&
\begin{aligned}
  &\it sum(m, n)=sum(n, m)\\[-1ex]
  &\it sum(m, n_{\rm 1})=sum(n_{\rm 1}, m)\\[-1ex]
  &\it m = sum(0, m)\\[-1ex]
  &\it m_{\rm 1} = sum(0, m_{\rm 1})
\end{aligned}
&&
\begin{aligned}
  &\it s(m_{\rm 1}) = sum(0, s(m_{\rm 1})) \\[-1ex]
  &\it sum(m, s(n_{\rm 1})) = sum(s(n_{\rm 1}), m)
\end{aligned}
&&
\textsf{[Reduce]}\displaybreak[0]\\
\hline
5.
&
\qquad\textrm{idem}
&&
\it sum(m, s(n_{\rm 1})) = sum(s(n_{\rm 1}), m)
&&
\textsf{[Derive]}\displaybreak[0]\\
\hline
6.
&
\begin{aligned}
  &\it sum(m, n)=sum(n, m)\\[-1ex]
  &\it sum(m, n_{\rm 1})=sum(n_{\rm 1}, m)\\[-1ex]
  &\it m = sum(0, m)\\[-1ex]
  &\it m_{\rm 1} = sum(0, m_{\rm 1})\\[-1ex]
  &\it s(sum(n_{\rm 1}, m)) {=} sum(s(n_{\rm 1}), m)\\[-1ex]
  &\it s(sum(n_{\rm 1}, m_{\rm 2})) {=} sum(s(n_{\rm 1}), m_{\rm 2})
\end{aligned}
\hspace{-4ex}
&&
\begin{aligned}
  &\it s(sum(n_{\rm 1}, 0))=sum(s(n_{\rm 1}), 0)\\[-1ex]
  &\it s(sum(s(n_{\rm 1}),\! m_{\rm 2})) {=} sum(s(n_{\rm 1}),\! s(m_{\rm 2}))
\end{aligned}
\hspace{-3ex}
&&
\,\textsf{[Reduce]}\displaybreak[0]\\
\hline
7.
&
\qquad\textrm{idem}
\hspace{-4ex}
&&
\it s(sum(s(n_{\rm 1}),\! m_{\rm 2})) {=} sum(s(n_{\rm 1}),\! s(m_{\rm 2}))
\hspace{-3ex}
&&
\textsf{[Reduce]}\displaybreak[0]\\
\hline
&
\qquad\textrm{idem}
\hspace{-4ex}
&&
~~\emptyset
&&
\\
\hline
\end{align*}
\end{center}
In the first step only \textsf{[Derive]} can be applied. Two new proof obligations (goals) are derived and added to $\cal G$: the first one is obtained using the derivative $\delta_0=0$ and the second one using the derivative $\delta_s=s(N_1{:}{\it Nat})$. Since we have $n_1\sim n$, the initial goal with $n_1$ replacing $n$ is added as a frozen hypothesis to $\cal F$. In the second step we assume that $\it s(m) = sum(0, s(m))$ is processed. Then only \textsf{[Derive]} can be applied again. 
The current goal, where $m$ is replaced by $m_1$ ($m_1\sim m$) and $\it sum(m_1,0)$ is reduced to $m$, is added as a frozen hypothesis  to $\cal F$ and its derived goals are added as new proof obligations to $\cal G$. The goal $\it 0 = sum(0, 0)$ is reduced using the basic entailment relation (=  equational deduction).
In the fourth step, the goal $\it s(m_{\rm 1}) = sum(0, s(m_{\rm 1}))$ is reduced using the frozen hypothesis $\it m_{\rm 1} = sum(0, m_{\rm 1})$. 
In the fifth step, the goal $\it sum(m, s(n_{\rm 1})) = sum(s(n_{\rm 1}), m)$ is equivalent to
$\it s(sum(n_{\rm 1}, m)) = sum(s(n_{\rm 1}), m)$ by applying the definition of {\it sum} and the frozen hypothesis $\it sum(m, n_{\rm 1})=sum(n_{\rm 1}, m)$. The derived goals of this last goal are added to $\cal G$ as new proof obligations and the frozen hypothesis obtained from it by replacing $m$ by $m_2$  ($m_2\sim m_1$). 

The non-linear goals can also be handled with circular induction. For instance, the goal $\it max(N,N)=N$ can be proved just in one step:
\begin{center}
\begin{align*}
& \cal F &&\cal G &&\rm Rule\\
\hline
1.~~
&
\emptyset 
&&
\it max(n, n) = n
&&
\textsf{[Derive]}\\
\hline
2.~~
&
\begin{aligned}
  &\it max(n, n) = n\\[-1ex]
  &\it max(n_{\rm 1}, n_{\rm 1}) = n_{\rm 1}
\end{aligned}
&&
\begin{aligned}
  &\it max(0,0)=0\\[-1ex]
  &\it max(s(n_{\rm 1}), s(n_{\rm 1})) = s(n_{\rm 1})
\end{aligned}
&&
\textsf{[Reduce]}\displaybreak[0]\\
\hline
3.~~
&
\qquad\textrm{idem}
&&
\it max(s(n_{\rm 1}), s(n_{\rm 1})) = s(n_{\rm 1})
&&
\textsf{[Reduce]}\\
\hline
&
\qquad\textrm{idem}
&&
\emptyset
&&
\\
\hline
\end{align*}
\end{center}
It is known that non-linear conjectures are problematic for automated inductive provers \cite{FalkeK06}; all non-linear examples from \cite{FalkeK06} were successfully handled with circular induction.
\end{example}

\begin{theorem}\textbf{(soundness of circular induction proof system)}\\
\label{thm:soundness-ci}
Let $\cal B$ a specification with constructors and $G$ be a set of goals. 
If $({\cal B},{\cal F}_0=\emptyset,{\cal G}_0=\efr{G})\Rightarrow\ldots\Rightarrow ({\cal B},{\cal F}_n,{\cal G}_n=\emptyset)$ using the rules of the proof system defined in Definition~\ref{def:ci},  then ${\cal B}\bentails G$.
\end{theorem}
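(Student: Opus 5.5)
The plan is to reduce the statement to the circular induction principle (Theorem~\ref{thm:icp}), or to its extended form (Theorem~\ref{thm:icp-ext}) if the hypotheses added by \textsf{[Derive]} are read as special hypotheses. This mirrors the soundness proof of circular coinduction in \cite{DBLP:conf/icfem/LucanuR09}. Given a successful derivation $({\cal B},\emptyset,\efr{G})\Rightarrow\ldots\Rightarrow({\cal B},{\cal F}_n,\emptyset)$, set
\[
F=\{e\mid \efr{e}\ \text{was selected by some \textsf{[Derive]} step}\}.
\]
Here each $e$ is taken with its inductive variables unfrozen, and incarnations $y'\sim y$ are identified with renamings of the same universally quantified variable. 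Then ${\cal F}_n$ consists exactly of the frozen forms of renamed copies of equations in $F$, that is, the sets ${\cal H}(\efr{e},\ldots)$. Since ${\cal B}\cup\efr{F}$ entails each such renamed copy (the variables are bound, so renaming is harmless), monotonicity and transitivity give ${\cal B}\cup\efr{F}\entails{\cal F}_n$.

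The first step is an invariant. For every $i$, every goal removed from ${\cal G}_i$ by \textsf{[Reduce]} satisfies ${\cal B}\cup{\cal F}_i\entails\efr{e}$. Since ${\cal F}_i\subseteq{\cal F}_n$, monotonicity yields ${\cal B}\cup\efr{F}\entails\efr{e}$ for it. Every goal $\efr{e}$ processed by \textsf{[Derive]} has $e\in F$, so reflexivity and monotonicity give ${\cal B}\cup\efr{F}\entails\efr{e}$. Because ${\cal G}_n=\emptyset$, every element of $\efr{G}$ and every derived goal in $\Delta[\efr{F}]$ was eventually removed by one of the two rules. Hence ${\cal B}\cup\efr{F}\entails \Delta[\efr{F}]$. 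It also follows that ${\cal B}\cup\efr{F}\entails\efr{G}$, since each goal of $G$ is either reduced directly or belongs to $F$.

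Next, Theorem~\ref{thm:icp} applied to $F$ gives ${\cal B}\bentails F$. For goals $g\in G\setminus F$, which were reduced directly, I must pass from ${\cal B}\cup\efr{F}\entails\efr{g}$ to ${\cal B}\bentails g$. The route is to take an experiment $\theta$ for $g$ and apply (A2) repeatedly along its decomposition into derivatives (Proposition~\ref{prop:exprm}). This yields ${\cal B}\cup\theta'[\efr{F}]\entails\theta[\efr{g}]$ for the induced experiment on the shared frozen variables. Since ${\cal B}\entails\theta'[F]$ (from ${\cal B}\bentails F$), (A1) removes the now ground frozen hypotheses, and transitivity gives ${\cal B}\entails\theta[g]$. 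A cleaner alternative is to add $G$ itself to $F$: ${\cal B}\cup\efr{F\cup G}\entails\Delta[\efr{G}]$ holds because $\Delta$-congruence propagates the entailment of $\efr{g}$ to its derivatives. Theorem~\ref{thm:icp} applied to $F\cup G$ then gives the result directly, and I would prefer this route.

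The main obstacle is the bookkeeping of frozen variables across steps. The hypotheses in ${\cal H}$ mention incarnation constants $y'$ rather than $y$, and a reduction at a later step may use hypotheses whose frozen constants are unrelated to those of the goal. The precondition of Theorem~\ref{thm:icp} requires a single signature $\Sigma(Y_F\cup Y_e)$ and a consistent reading of $\Delta[\efr{F}]$. I would address this by fixing a global renaming, so that every frozen constant is an incarnation of a variable in $F$, and checking that each derivative $\delta_{c,y}$ applied to $\efr{e}$ produces exactly the goal added by \textsf{[Derive]} under that renaming. Where an incarnation hypothesis is used on a goal in a way that is not literally an instance of $\efr{F}$, I would instead classify ${\cal H}(\efr{e},\ldots)$ as special hypotheses for $F$ and invoke Theorem~\ref{thm:icp-ext}. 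That requires verifying condition 1) of Definition~\ref{def:spechyp}, namely that incarnations have smaller depth, and this is where I expect the real work to lie.
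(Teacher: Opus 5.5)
\textbf{There is a genuine gap at the first step.} You claim ${\cal B}\cup\efr{F}\entails{\cal F}_n$ because ``the variables are bound, so renaming is harmless''. In a frozen equation nothing is bound. By Definition~\ref{def:freezing}, the inductive variables are replaced by constants $y{.}s$ and the quantifier over $Y$ is dropped, so an incarnation $y'$ is a \emph{different} constant. Hence $\efr{e}$ does not entail $\efr{e}[y'/y]$. For example, from the frozen $\mathit{sum}(m,n)=\mathit{sum}(n,m)$ you cannot derive $\mathit{sum}(m,n_1)=\mathit{sum}(n_1,m)$; if you could, freezing would inhibit nothing.

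Nor is this a marginal case for your fallback. The hypotheses that close cycles are incarnation copies; in Definition~\ref{def:ci}, ${\cal H}(\efr{e},\ldots)$ is all that \textsf{[Derive]} adds. In \emph{NATSUM}, for instance, $s(m_1)=\mathit{sum}(0,s(m_1))$ is reduced using $m_1=\mathit{sum}(0,m_1)$, not the derived goal $m=\mathit{sum}(0,m)$. So your invariant ``${\cal B}\cup\efr{F}\entails\efr{e}$ for every reduced goal'' is not established, and the premise of Theorem~\ref{thm:icp} is not met. Enlarging $F$ by the copies does not help, because the derivation never discharges the copies' own derivatives. Identifying an incarnation with its source is not available either: with $\ell_1,\ell_2\sim\ell$ it would turn the goal on $\ell_1;\ell_2$ into the strictly weaker goal on $\ell;\ell$.

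\textbf{The missing idea is the step you postpone, and it is the heart of the paper's proof.} You locate it correctly but leave it undone and treat it as an exception. The claim is that the unfrozen copies $H$ are special hypotheses for $e$. Suppose ${\cal B}\entails\theta^{\leq}[\efr{e}]$ and $\efr{e'}=\efr{e}[y'/y]$ with $y'\sim y$. Then $\theta(y)=c(\ldots\theta(y')\ldots)$ for some constructor $c$. Let $\theta'$ agree with $\theta$ except that $\theta'(y)=\theta(y')$. Then $|\theta'|\le|\theta|$, so ${\cal B}\entails\theta'[\efr{e}]$, and $\theta'[\efr{e}]$ is literally $\theta[\efr{e'}]$. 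With this, the derivation establishes the premise of Theorem~\ref{thm:icp-ext}, not of Theorem~\ref{thm:icp}. The rest is the coinductive soundness argument with ``no inductive variables'' in place of ``visible''.

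\textbf{Your preferred handling of directly reduced goals of $G$ is also unsound.} $\Delta$-congruence does not turn ${\cal B}\cup\efr{F}\entails\efr{g}$ into ${\cal B}\cup\efr{F}\entails\Delta[\efr{g}]$. The equation $\efr{g}$ has no inductive variables, and $\Delta[\efr{g}]$ is only notation for $\efr{\Delta[g]}$. If that step were valid, reflexivity would give ${\cal B}\cup\efr{F}\entails\Delta[\efr{F}]$ for every $F$, and Theorem~\ref{thm:icp} would prove every goal. Keep your first route instead: it moves the hypotheses and the goal together through the experiment via (A2), and only then discharges them with (A1).
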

\begin{proof}
Let $H$ denote the set of unfrozen equations from ${\cal H}(\efr{e},{\it FVar}(\Delta[\efr{e}]))$.
First we show that $H$ is a set of special hypotheses for $e$. Let $\theta$ be an experiment such that 
${\cal B}\entails\theta^\leq[\efr{e}]$. We have to show that ${\cal B}\entails\theta[\efr{H}]$. Let $e'$ be an arbitrary equation from $H$. It follows that $\efr{e'}$ is $\efr{e}[y'/y]$ for some $y'\sim y$. There is a constructor $c$ such that $\theta(y)=c(\ldots \theta(y')\ldots)$ by the definition of $\sim$. Let $\theta'$ denote the substitution equal to $\theta$ excepting $\theta'(y)$ which is equal to $\theta(y')$. Obviously, $\theta'\leq \theta$ and hence  ${\cal B}\entails\theta'[\efr{e}]$. The conclusion ${\cal B}\entails\theta[\efr{e'}]$ follows from the fact that $\theta'[\efr{e}]$ and $\theta[\efr{e'}]$ are the same.\\
The rest of the proof of the theorem follows the same steps as that of Theorem 5 in \cite{DBLP:conf/icfem/LucanuR09}, but using the dual notions: ``equation without inductive variables" instead of ``visible equation" and ``equation with inductive variables" instead of ``hidden equation". 
\qed\end{proof}

\begin{example} (\textit{Failing example})\\
\label{ex:fail}
We present an example where the circular induction proof system defined in Definition~\ref{def:ci} fails. Later, we show how the proof system can be extended in order to handle such examples.
Assume that $\cal B$ is the specification {\it TREE} defined in Example \ref{ex:tree} and that $\cal G$ consists of the goal $\it mirror(mirror(\ell))=\ell$, where $\ell$ is the frozen form of $\it L{:}TList$. The initial $\cal F$ is empty. 
The derivation tree generated by the proof system defined in Definition~\ref{def:ci} is described as follows
\begin{align*}
& \cal F &&\cal G &&\rm Rule\\
\hline
1.~
&
\emptyset 
&&
\it mirror(mirror(\ell))=\ell
&&
\textsf{[Derive]}\\
\hline
2.~
&
\begin{aligned}
  &\it mirror(mirror(\ell))=\ell\\[-1ex]
  &\it mirror(mirror(\ell_{\rm 1}))=\ell_{\rm 1}\\[-1ex]
  &\it mirror(mirror(\ell_{\rm 2}))=\ell_{\rm 2}
\end{aligned}
&&
\begin{aligned}
  &\it mirror(mirror(nil))=nil\\[-1ex]
  &\it mirror(mirror(\ell_{\rm 1};\ell_{\rm 2}))=\ell_{\rm 1};\ell_{\rm 2}\\[-1ex]
  &\it mirror(mirror([t]))=[t]
\end{aligned}
&&
\textsf{[Reduce]}\displaybreak[0]\\
\hline
3.~
&
\qquad\textrm{idem}
&&
\begin{aligned}
  &\it mirror(mirror(\ell_{\rm 1};\ell_{\rm 2}))=\ell_{\rm 1};\ell_{\rm 2}\\[-1ex]
  &\it mirror(mirror([t]))=[t]\\
\end{aligned}
&&
\textsf{[Reduce]}\displaybreak[0]\\
\hline
4.~
&
\qquad\textrm{idem}
&&
\it [mirror(mirror(t))]=[t]
&&
\textsf{[Derive]}\displaybreak[0]\\
\hline
5.~
&
\begin{aligned}
  &\it mirror(mirror(\ell))=\ell\\[-1ex]
  &\it mirror(mirror(\ell_{\rm 1}))=\ell_{\rm 1}\\[-1ex]
  &\it mirror(mirror(\ell_{\rm 2}))=\ell_{\rm 2}\\[-1ex]
  &\it [mirror(mirror(t))]=[t]
\end{aligned}
\hspace{-2ex}
&&
\begin{aligned}
  &\it [mirror(mirror([e_{\rm 1}]))]=[[e_{\rm 1}]]\\[-1ex]
  &\it [mirror(mirror(tr(e_{\rm 2},\ell_{\rm 3}]))]=[tr(e_{\rm 2},\ell_{\rm 3})]
\end{aligned}
\hspace{-2ex}
&&
\textsf{[Reduce]}\displaybreak[0]\\
\hline
6.~
&
\qquad\textrm{idem}
\hspace{-2ex}
&&
\it [mirror(mirror(tr(e_{\rm 2},\ell_{\rm 3}))]=[tr(e_{\rm 2},\ell_{\rm 3})]
&&
\hspace{-2ex}
\\
\hline
\end{align*}

\noindent
where $\ell_i$ is the frozen form of $L_i{:}\it TList$, $t$ is the frozen form of $T{:}\it Tree$, and $e_i$ is the frozen form of $E_i{:}\it Elt$.
 In the first step, \textsf{[Derive]} produces two hypotheses because $\ell_1\sim\ell$ and $\ell_2\sim\ell$; these hypotheses will be used in the non-linear conjecture added by this rule.  The last goal can be reduced to 
$\it [tr(e_{\rm 2}, mirror(mirror(\ell_{\rm 3}))]=[tr(e_{\rm 2},\ell_{\rm 3})]$, which is irreducible because we have no hypotheses on $\ell_3$.
The problem could be solved if we extend $\sim$ such that $t \sim \ell$, $\ell_3\sim t$.
Then, at the fourth step when \textsf{[Derive]} is applied again, we add $\it mirror(mirror(\ell_{\rm 3}))=\ell_{\rm 3}$ to $\cal F$, which is a special hypothesis, because $\ell_3\sim^+\ell$, where $\sim^+$ is the transitive closure of~$\sim$. Next we formalize this extension. 
\end{example}

Let $\delta_c$ be a derivative and $e$ a goal.
If $\delta_c[\efr{e}]$ is an equation of the form $\efr{e}[c(y_1{.}s_1,\ldots,y_n{.}s_n)/y{.}s]$, then $y_i{.}s_i \simstar  y{.}s$ for all $i=1,\ldots, n$. Let $\simstarplus$ denote the transitive closure of $\simstar$.

\begin{definition}
\label{def:ciext}
The \textbf{extended circular induction proof system} consists of the following inference rules, represented as conditional reduction rules:
\begin{align*}
& {\sf [Reduce]}\\
&\quad ({\cal B},{\cal F},{\cal G}\cup\{\efr{e}\})\Rightarrow ({\cal B},{\cal F},{\cal G})\quad{\rm~if~}{\cal B}\cup{\cal F}\entails \efr{e}\displaybreak[0]\\
& {\sf [Derive]}\\
&\quad ({\cal B},{\cal F},{\cal G}\cup\{\efr{e}\})\Rightarrow\\
&\quad ({\cal B},{\cal F}\cup\{\efr{e}\}\cup{\cal H}(\efr{e},{\it FVar}(\Delta[\efr{e}]))\cup{\cal H}^*({\cal F},{\it FVar}(\Delta[\efr{e}])),{\cal G}\cup\Delta[\efr{e}])\\
&\quad {\rm~if~}{\cal B}\cup{\cal F}\not\entails \efr{e}\\[-6ex]
\end{align*}
where ${\cal H}^*(\efr{f}, FV)=\{\efr{f}[y'/y]\mid y'\simstarplus y, y'\in FV, y\in{\it FVar}(\efr{f}), {\it sort}(y)={\it sort}(y')\}$ and  ${\cal H}^*({\cal F}, FV)=\bigcup_{\efr{e}\in {\cal F}} {\cal H}^*(\efr{e}, FV)$
\end{definition}

${\cal H}^*({\cal F},{\it FVar}(\Delta[\efr{e}]))$ includes an instance of each frozen hypothesis in $\cal F$ having a frozen variable $y$ for which there is a similar new frozen variable $y'$, added by the new derived goals. Using $\simstarplus$, the "reincarnation" of a variable $y$ can be made by means of some intermediate "hosts" (= variables of different sorts).

\begin{example}
\label{ex:succ}
Using the rules of the extended circular induction proof system, the derivation tree from Example~\ref{ex:fail} can be continued as follows:
\begin{align*}
\hline
4.~
&
\begin{aligned}
  &\it mirror(mirror(\ell))=\ell\\[-1ex]
  &\it mirror(mirror(\ell_{\rm 1}))=\ell_{\rm 1}\\[-1ex]
  &\it mirror(mirror(\ell_{\rm 2}))=\ell_{\rm 2}
\end{aligned}
&&
\it [mirror(mirror(t))]=[t]
&&
\textsf{[Derive]}\displaybreak[0]\\
\hline
5.~
&
\begin{aligned}
  &\it mirror(mirror(\ell))=\ell\\[-1ex]
  &\it mirror(mirror(\ell_{\rm 1}))=\ell_{\rm 1}\\[-1ex]
  &\it mirror(mirror(\ell_{\rm 2}))=\ell_{\rm 2}\\[-1ex]
  &\it mirror(mirror(\ell_{\rm 3}))=\ell_{\rm 3}\\[-1ex]
  &\it [mirror(mirror(t))]=[t]
\end{aligned}
&&
\hspace{-2ex}
\begin{aligned}
  &\it [mirror(mirror([e_{\rm 1}]))]=[[e_{\rm 1}]]\\[-1ex]
  &\it [mirror(mirror(tr(e_{\rm 2},\ell_{\rm 3}]))]=[tr(e_{\rm 2},\ell_{\rm 3})]
\end{aligned}
&&
\hspace{-2ex}
\textsf{[Reduce]}\displaybreak[0]\\
\hline
6.~
&
\qquad\textrm{idem}
&&
\hspace{-2ex}
\it [mirror(mirror(tr(e_{\rm 2},\ell_{\rm 3}))]=[tr(e_{\rm 2},\ell_{\rm 3})]
&&
\hspace{-2ex}
\textsf{[Reduce]}\displaybreak[0]\\
\hline
7.~
&
\qquad\textrm{idem}
&&
\emptyset
&&
\hspace{-2ex}
\\
\hline
\end{align*}

\noindent
Comparing with the previous proving attempt, a new frozen axiom was added to $\cal F$ in the fourth step because $\ell_3\simstarplus\ell$. 
\end{example}


\begin{theorem}\textbf{(soundness of extended circular induction proof system)}\\
\label{thm:soundness-ci-ext}
Let $\cal B$ a specification with constructors and $G$ be a set of goals. 
If $({\cal B},{\cal F}_0=\emptyset,{\cal G}_0=\efr{G})\Rightarrow\ldots\Rightarrow ({\cal B},{\cal F}_n,{\cal G}_n=\emptyset)$ using the rules of the proof system defined in Definition~\ref{def:ciext},  then ${\cal B}\bentails G$.
\end{theorem}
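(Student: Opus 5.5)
The plan is to follow the proof of Theorem~\ref{thm:soundness-ci}. We reduce soundness to the extended inductive circularity principle (Theorem~\ref{thm:icp-ext}). The only new ingredient is the set of hypotheses ${\cal H}^*({\cal F},{\it FVar}(\Delta[\efr{e}]))$ added by \textsf{[Derive]}. We must show that these are special hypotheses in the sense of Definition~\ref{def:spechyp}, so that they lie in the special-hypothesis closure of the goals and can be used for free.

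First, fix a derivation $({\cal B},\emptyset,\efr{G})\Rightarrow\ldots\Rightarrow({\cal B},{\cal F}_n,\emptyset)$. Let $F$ be the set of unfrozen goals on which \textsf{[Derive]} was applied, together with $G$. The invariant to prove along the derivation is
\[
{\cal B}\cup{\cal F}_i\entails \efr{e}\ \mbox{for every goal $\efr{e}$ removed by \textsf{[Reduce]} up to step $i$,}
\]
and ${\cal F}_i\subseteq \efr{F^\leq}$. This gives ${\cal B}\cup\efr{F^\leq}\entails\Delta[\efr{F}]$: every element of $\Delta[\efr{F}]$ was eventually reduced, or derived and hence itself in $F$, in which case reflexivity and monotonicity apply. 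Theorem~\ref{thm:icp-ext} then yields ${\cal B}\bentails F^\leq\supseteq G$.

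The key step is ${\cal F}_n\subseteq\efr{F^\leq}$, which we check on the three kinds of hypotheses. The hypothesis $\efr{e}$ itself lies in $\efr{F}$. The set ${\cal H}(\efr{e},\cdot)$ is special by the argument already given in Theorem~\ref{thm:soundness-ci}. For ${\cal H}^*$, take $\efr{f}[y'/y]$ with $\efr{f}\in{\cal F}$ and $y'\simstarplus y$ of the same sort. Unfold the chain $y'=z_k\simstar z_{k-1}\simstar\cdots\simstar z_0=y$. Each $\simstar$ step comes from a derivative substitution $z_{j-1}\mapsto c_j(\ldots z_j\ldots)$. Hence for any experiment $\theta$ on the current frozen variables, $\theta(y)$ contains $\theta(y')$ as a proper subterm, so $|\theta(y')|<|\theta(y)|$. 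Now define $\theta'$ equal to $\theta$ except $\theta'(y)=\theta(y')$. Then $\theta'\in\theta^\leq$ and $\theta[\efr{f}[y'/y]]=\theta'[\efr{f}]$. So whenever ${\cal B}\entails\theta^\leq[\efr{F}]$ holds, we get ${\cal B}\entails\theta[\efr{f}[y'/y]]$, which is condition~1 of Definition~\ref{def:spechyp}. Closure conditions~2 and~3 then place $f[y'/y]$ in $F^\leq$. For $f$ in ${\cal F}$ that is itself only special, not a member of $F$, we use $(F^\leq)^\leq=F^\leq$ and induct on the step at which $f$ was added.

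The main obstacle is the hidden-sort chain in $\simstarplus$. The intermediate variables $z_j$ may have sorts other than ${\it sort}(y)$, as with $t{:}{\it Tree}$ between $\ell$ and $\ell_3$. We must make sure that the experiment $\theta$ on $y$ really determines $\theta(y')$ as a subterm, i.e.\ that the successive derivatives compose into a single experiment of the form $[\theta_1,\ldots,\theta_n][\delta_{c,y}[\bullet]]$. I would handle this by induction on the length of the chain, using the inductive definition of experiments. The depth comparison must be with respect to $|\cdot|$ restricted to the relevant variables, so I would refine $\theta^\leq$ to compare pointwise on $y$ if the max-based $|\theta|$ turns out too coarse. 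The remaining bookkeeping, distinguishing equations with and without inductive variables in place of visible and hidden ones, follows Theorem~5 in \cite{DBLP:conf/icfem/LucanuR09} exactly as in Theorem~\ref{thm:soundness-ci}.
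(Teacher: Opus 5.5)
Your proposal follows the paper's route. The paper's proof consists only of observing that ${\cal H}^*({\cal F},{\it FVar}(\Delta[\efr{e}]))$ is a set of special hypotheses, via the same $\theta'(y)=\theta(y')$ trick used for ${\cal H}$ in Theorem~\ref{thm:soundness-ci}, and then reusing the bookkeeping of Theorem~\ref{thm:soundness-ci}. Your chain argument for $y'\simstarplus y$ is a more detailed version of that observation. As you already suspect, the subterm relation holds between $\theta(y')$ and the ancestor experiment obtained by composing the derivative substitutions along the chain, not for an arbitrary $\theta$ on the current frozen variables; the paper glosses over the same point.

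There is one slip in your bookkeeping. You put all of $G$ into $F$, but a goal $g\in G$ discharged directly by \textsf{[Reduce]} never has $\Delta[\efr{g}]$ added to the goal set, since that rule adds nothing. So your claim that every element of $\Delta[\efr{F}]$ was eventually reduced or derived fails for such $g$.

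The fix is to take $F$ to be only the goals on which \textsf{[Derive]} was applied. Theorem~\ref{thm:icp-ext} then covers the derived members of $G$. For a $g$ reduced at step $i$, take each experiment $\theta$ and push it through ${\cal B}\cup{\cal F}_i\entails\efr{g}$ using (A2). Then discharge the fully instantiated hypotheses using ${\cal B}\bentails F^\leq$ and transitivity, and drop the still-frozen ones with (A1). This gives ${\cal B}\entails\theta[g]$; it is the step the paper delegates to Theorem~5 of \cite{DBLP:conf/icfem/LucanuR09}.
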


\begin{proof}
It is enough to observe that ${\cal H}^*({\cal F},{\it FVar}(\Delta[\efr{e}]))$ is a set of special hypotheses for $\cal F$. The rest of the proof is similar to that of Theorem~\ref{thm:soundness-ci}.
\qed\end{proof}

\section{Implementation in CIRC}
\label{sec:impl}

{\CIRC} \cite{circ-zenodo,lucanu-etal-2009-calco} is an automated inductive and coinductive prover built around the circularity principle.
It is implemented in Maude \cite{DBLP:conf/maude/2007} as a metalanguage application, thus making
use of important reflection capabilities in order to enhance a
specification during a proving session. The prover extends Maude specifications
with behavioral properties. The specifications provided to {\CIRC} are order-sorted \cite{osa1} rather than
many-sorted. This involves the definition of a partial order relation
between sorts which has the role of avoiding the explicit use of
inclusion operators. For instance, for the specification presented
in Example~\ref{ex:tree}, instead of using the operation
$[\_]:\it Tree \to TList$
we can simply specify the subsort relation $\it Tree < TList$.
The capability of working with subsorts is also exploited in the efficient implementation of the circular induction. 
 
As presented in \cite{lucanu-etal-2009-calco}, the entailment relation used in \CIRC\ is
$\mathcal B\circentails (\forall X)t=t'$ {\tt if} $\land_{i\in I} (u_i=v_i)$ iff
${\rm nf}(t)={\rm nf}(t')$, where ${\rm nf}(t)$ is computed
as follows:\\
\indent -- the variables $X$ of the equations are turned into fresh
constants;\\
\indent -- the  condition equalities $u_i=v_i$ are added as equations to the
specification;\\
\indent -- the equations in the specification are oriented and used as rewrite
rules.

The engine implements the proof system
given in \cite{rosu-lucanu-2009-calco} by a set of reduction rules
$({\cal B}, {\cal F}, {\cal G}) \Rightarrow ({\cal B}, {\cal F'}, {\cal G'})$, where
${\cal B}$ represents the (original) algebraic specification,
${\cal F}$ is the set of frozen axioms and
${\cal G}$ is the current set of proof obligations. During a proof by induction
session, the steps followed by the engine resemble
those made during a proof by coinduction. The difference consists of the way
in which frozen goals and special hypotheses are handled.

In what follows we present the correlation between the theory presented so far
and the way  {\CIRC} manages data during a proving session by induction. 
The inference rules are implemented using a similar mechanism to that used for coinductive proofs. 
The main difficulty raised by circular  induction 
is the internal management of special hypotheses
${\cal H}({\cal F},{\it FV}) \cup
{\cal H}^*({\cal F},{\it FV})$. The number of these hypotheses could exponentially increase, and their computation become expensive. We use the subsort facility in order to drastically reduce the number of stored hypotheses. We store only generic hypotheses represented with variables of fresh subsorts and each new generated constant is declared as inhabitant to an appropriate such subsort. Then the set of all needed special hypotheses are obtained by instantiation. For that we need to distinguish between initial variables and their incarnations.

\begin{definition}
A frozen inductive variable $y.s$ is a \textbf{source} iff
$({\not \exists} y'.s)  y.s \simstarplus  y'.s$.
\end{definition}

For each frozen induction variable $y'.s$ we define the operation {\it source}
such that ${\it source}(y'.s) = y.s$ if $y'.s \simstarplus y.s$
and $y.s$ is a source. Each time we add a new source $y.s$, either at
the beginning, or during a proof session, we enrich
the specification with a fresh subsort $s\langle y \rangle$ such that $s\langle y \rangle < s$.

{\CIRC} implements a variant of the rule \textsf{[Derive]}
from Definition~\ref{def:ciext} such that instead of adding a specific
hypothesis (equation) for each of the generated variables, it adds only one
hypothesis that may be used by all the frozen variables with a common source.
This leaves the set of hypotheses uncluttered and helps for improving
the speed of the matching engine:
\begin{align*}
& {\sf [Derive]}\\
&\quad ({\cal B},{\cal F},{\cal G}\cup\{\efr{e}\})\Rightarrow
\quad ({\cal B'}, {\cal F}\cup{\cal H'}(\efr{e}, {\it FVar}(\Delta[\efr{e}])) , {\cal G}\cup\Delta[\efr{e}])
\quad {\rm~if~}{\cal B}\cup{\cal F}\not\entails \efr{e}\displaybreak[0]\\
&\,\,{\rm where~}
{\cal H}'(\efr{e}, {\it FV}) =
\{\efr{e}[y'{:}s\langle {\it source}(y) \rangle/y.s ] \mid
y'.s \in {\it FV}, y.s \in {\it FVar}(\efr{e}),\\
&\,\, y'.s\simstarplus y.s \} \textnormal{ and }
{ \cal B }' {~\rm is~} { \cal B}
\textnormal{ enriched with the constants } y'.s' \textnormal{ of sort }
s'\langle {\it source}(y')\rangle \\
&\,\,\textnormal{for each } y'.s' \in {\it FVar}(\Delta[\efr{e}]).
\end{align*}

\noindent
We have $\efr{f}\in {\cal H}^*({\cal F},{\it FV})\cup{\cal H}(\efr{e},{\it FVar}(\Delta[\efr{e}]))$ iff 
$\efr{f}$ is an instance of a hypothesis in ${\cal F}\cup{\cal H'}(\efr{e}, {\it FVar}(\Delta[\efr{e}]))$.
In order to understand the special hypothesis management algorithm better,
let us analyze the derivation steps from Examples~\ref{ex:fail}
and \ref{ex:succ}.

Step 1 involves the creation of a new sort
${\it TList}\langle\ell\rangle < {\it TList}$.
The {\it source} function is defined as:
${\it source}(\ell) = \ell$,
${\it source}(\ell_1) = \ell$, ${\it source}(\ell_2) = \ell$.
Here $\ell, \ell_1, \ell_2$ are added as constants of sort ${\it TList}\langle\ell\rangle$.
The frozen hypothesis added to the specification is
$\it mirror(mirror(L{:}TList\langle\ell\rangle)) = L{:}TList\langle\ell\rangle$.

At step 4 we add the sort ${\it Tree}\langle t\rangle < {\it Tree}$ and
enhance the {\it source} function such that
${\it source}(t) = t$, ${\it source}(l_3) = l$.
The added constants are 
$t.{\it Tree}\langle t\rangle$ and
$\ell_3.{\it TList}\langle \ell \rangle$. The new frozen
hypothesis is
$\it [mirror(mirror(T{:}Tree\langle t \rangle))] = 
[T{:}Tree\langle t \rangle]$.

The reduction at step 6 succeeds because it makes
use of the first hypothesis instantiated with $\ell_3$.

Excepting the management of the special contexts and the meanings of the frozen equations, the engines implementing circular induction and circular coinduction, respectively, are similar. An immediate main advantage is that the mechanisms added to increase the power of one engine can be used for free by the other one.
We exemplify this feature showing how the simplification rules \cite{synasc09}, initially added for the coinduction engine,  
can be used by the induction engine for deriving simpler goals.
Let us consider the goal $\sum_{k=0}^N fib(k)=np(s(N))$ from Example~\ref{ex:fib}. The sum can be computed by an operation $\it sum\langle k\rangle$ defined by $\it sum\langle k\rangle(0) =0$ and $\it sum\langle k\rangle(s(N))=sum\langle k\rangle(N)+fib(s(N))$.  
One intermediate goal computed by circular induction is
$\it fib(s(N)) + sum\langle k\rangle(N) = s(np(N))+ sum\langle k\rangle(N)$. {\CIRC} succeeds to automatically prove this goal, but it would be nice to have it in a simpler way, as a general result: $\it fib(s(N)) = s(np(N))$.
This can be accomplished by adding to the specification the following simplification rule:
\[
\dfrac{M+P=N+P}{M=N}
\]
and to define a proof tactic which tries to apply a simplification before each derivation. The proof tactics are specified in {\CIRC} using a simple strategy language~\cite{CaltaisGLG08}. 

\subsection{Circular Coinduction and Circular Induction Working Together}

Since both circular coinduction and circular induction are defined using the same algebraic framework, they could be combined for proving a given goal.
Here we present a simple example  to exhibit how the two proof techniques are combined in the current version of \CIRC. We consider the specification \textit{NATSUM} defined in Example~\ref{ex:nat-sum} and streams defined over natural numbers with the operations {\it hd}, {\it tl}, and {\it zeros} defined as in Example~\ref{ex:streams} and together with an addition operation $\_+\_:\it Stream~Stream\to Stream$ defined by $\it hd(S_{\rm 1}+S_{\rm 2})=sum(hd(S_{\rm 1}), hd(S_{\rm 2}))$, $\it tl(S_{\rm 1}+S_{\rm 2})=tl(S_{\rm 1}) + tl(S_{\rm 2})$. 
The goal is to prove that $\it zeros +S=S$.
We start by applying the circular coinduction proof tactic:

{\fontsize{9}{9}\selectfont
\begin{verbatim}
(add goal zeros + S:Stream = S:Stream .)
(coinduction .)
. . .
Visible goal [* sum(0,hd(S:Stream)) *] = [* hd(S:Stream) *] 
failed during coinduction.
\end{verbatim}
}

\noindent
The proof sticks on the intermediate goal $\it sum(0,hd(S)) = hd(S)$, because that it is not direct consequence of {\it NATSUM}, and we have to prove it by induction. We cannot apply the circular induction proof technique because the goal does not include a variable of sort {\it Nat}. We save the current state of the proof, and start a proof by induction on a generalization of the goal: $\it sum(0, N)=N$, where $N$ is a variable of sort {\it Nat} (this can be obtained using the command \texttt{(generalize~.)}). After the engine finishes the proof by induction, it automatically reloads the saved state and resume the left proof by coinduction:

{\fontsize{9}{9}\selectfont
\begin{verbatim}
(save proof state .)
(add goal sum(0, N:Nat) = N:Nat .)
(induction  .)
. . .
Proof state loaded.
(coinduction .)
Proof succeeded.
. . .
Proved properties:
  sum(0,N#0:Nat<N>) = N#0:Nat<N>
  sum(0,N:Nat) = N:Nat
  zeros + S:Stream = S:Stream
\end{verbatim}
}

\noindent
The above procedure is not yet completely mechanized. In order to make it automated, the tool must be extended with the ability to handle the induction reasoning as a basic entailment relation for the circular coinduction.

\section{Related Work}
\label{sec:related-work}

There are many mechanizable induction techniques proposed in the
literature; reviews of the explicit techniques can be found
in~\cite{ci:bundy} and of the implicit ones in~\cite{ci:comon}. We
group the discussion into two families: the classical
coverset/test-set tradition, to which circular induction was
originally compared, and the more recent \emph{cyclic-proof} (or
\emph{infinite-descent}) tradition, into which circular induction most
naturally fits when viewed with hindsight.

\subsection{Coverset, rewrite and test-set induction}

Coversets provide the basis for many mechanizable induction
principles. Coverset induction (see, e.g.,~\cite{ci:zhang}) explores
the recursive definitions of the operations occurring in conjectures;
the method is tied to a well-founded ordering, which is not always easy
to find. The set of derivatives used in circular induction are
coversets associated with \emph{sorts} (not operations), used to derive
new goals that are not always subgoals; circular induction requires no
explicit well-founded ordering, its soundness resting instead on the
implicit ordering of the experiments. Rewrite (term-rewriting)
induction~\cite{ci:reddy} is also based on coversets and uses the
rewrite relation of a uniformly terminating rewrite system as its
well-founded ordering, which typically assumes an explicit partial
ordering over operators. Test-set induction (see,
e.g.,~\cite{ci:bouhoula97}) combines an explicit, test-set--based
induction scheme with proof by consistency (implicit induction) used as
a refutation technique, and was implemented in
\textsc{Spike}~\cite{ci:spike}. Circular induction discovers the
analogue of a test set on-the-fly while it unrolls the derivation
graph, with no separate algorithm for computing test/cover sets and no
built-in refutation procedure.

\subsection{Cyclic proof and proof by infinite descent}

Seen from today, circular induction is an instance of a much broader
idea that crystallized independently and almost simultaneously in the
proof-theory community: \emph{cyclic proof}, in which the explicit
induction rule is replaced by a finite derivation containing
``back-links'' (cycles), whose soundness is guaranteed by a global
condition expressing an argument by Fermat-style \emph{infinite
descent}. The foundational reference is Brotherston and
Simpson~\cite{ci:brotherston-simpson}, who give two sequent calculi for
first-order logic with inductively defined predicates: one with
explicit (Henkin-complete) induction rules, and one with
\emph{non-well-founded} proofs in which the left rules for inductive
predicates are mere case-splits and soundness is recovered by a global
\emph{trace condition}---every infinite path must carry a trace that
progresses (decreases) infinitely often. Restricting the infinitary
system to regular (finitely representable) proofs yields the
\emph{cyclic} system, which they show subsumes explicit induction and
conjecture to be equivalent to it. Circular induction can be read as
the dual, \emph{equational-logic} counterpart of this picture: our
\emph{derivatives} are exactly the case-split (unfolding) steps; our
\emph{frozen equations} play the structural role that the trace
condition plays globally, inhibiting unsound use of a hypothesis until
descent has genuinely occurred; and our soundness theorems
(Theorems~\ref{thm:icp}, \ref{thm:soundness-ci} and \ref{thm:soundness-ci-ext}) appeals to the implicit well-founded order on
experiments rather than to a separately checked global condition. A
distinctive feature of our setting is that the very same circularity
principle, applied to destructors instead of constructors, yields
circular \emph{co}induction, so that induction and coinduction live in
one framework---a unification that the cyclic-proof literature reached
only later (see below).

\paragraph{Generic cyclic provers (\textsc{Cyclist}).}
Brotherston, Gorogiannis and Petersen~\cite{ci:cyclist} distil the
construction of~\cite{ci:brotherston-simpson} into \textsc{Cyclist}, a
generic, fully automated cyclic theorem prover: an OCaml functor
parametric in the sequents, inference rules and trace-pair function of
the object logic, instantiated to first-order logic with inductive
predicates, separation-logic entailment, and separation-logic program
termination. The global trace condition is decided by checking
language inclusion of two B\"uchi automata via a model checker.
\textsc{Cyclist} and CIRC share the ambition of a single, logic-generic
engine---\textsc{Cyclist} is parametric in the logic, while circular
(co)induction is parametric in the basic entailment relation $\vdash$.
The technical contrast is instructive. \textsc{Cyclist} reasons about
inductively defined \emph{predicates} and, as its authors note, has
``difficulty dealing with heavily-equational goals''; commutativity of
addition is reported as not provable, precisely because it needs
equational rewriting and a non-trivial lemma. Circular induction,
working natively over equational specifications, proves commutativity
and associativity of \texttt{sum} automatically, discovering the
required lemmas as frozen hypotheses. Conversely, \textsc{Cyclist}'s
B\"uchi-automaton soundness check copes with overlapping cycles and
complex induction schemes (e.g.\ Wirth's ``P\&Q'' example) that are
beyond a purely local discipline; circular induction trades some of
that generality for the simplicity of needing only two inference rules
and no global automaton check.

\paragraph{Synthesising the induction (cyclic abduction).}
Brotherston and Gorogiannis~\cite{ci:cyclic-abduction} push cyclic
proof from \emph{verification} towards \emph{synthesis}: their
\textsc{Caber} tool (built on \textsc{Cyclist}) performs \emph{cyclic
abduction}, simultaneously constructing a cyclic safety/termination
proof of a heap-manipulating \texttt{while} program and \emph{abducing}
the inductive separation-logic predicates (lists, trees, cyclic and
composite structures) that the proof needs. This is a close relative,
in spirit, of the lemma discovery that circular induction
performs while it unrolls a proof: in both cases the missing inductive
content is guessed on demand to make the cycle close. The settings
differ---\textsc{Caber} abduces shape predicates for imperative pointer
programs in separation logic, whereas circular induction discovers
auxiliary equational lemmas (e.g.\ $\mathit{sum}(0,M)=M$) and
incarnation relations for algebraic data types.

\paragraph{Integrating induction and coinduction.}
Cohen and Rowe~\cite{ci:cohen-rowe} are, to our knowledge, the first in
the cyclic-proof tradition to integrate induction and coinduction in
one system, an aim that motivated circular (co)induction from the
outset. They extend transitive-closure logic with a dual
``co-closure'' operator forming \emph{greatest} fixed points, and give
a sound and complete non-well-founded proof system whose cyclic
subsystem supports automated inductive \emph{and} coinductive reasoning,
demonstrated on streams. The parallel with our Figure~\ref{fig:duality}
is striking: where we obtain coinduction by dualising constructors to
destructors and ``freezing at top'' to ``inhibiting at bottom'', they
obtain it by dualizing least to the greatest fixed points via closure/%
co-closure. Their development is carried out for a relational fixed-point
logic with completeness results; ours is carried out for behavioral
equational logic with an implemented prover (CIRC), and the two can be
seen as complementary realizations of the same unifying intuition.

\paragraph{Deciding infinite descent efficiently (\textsc{Cyclone}).}
A recurring cost in the \textsc{Cyclist} line is the soundness check
itself: deciding the infinite-descent/trace condition is
PSPACE-complete. Cohen, Rowe and Shaked~\cite{ci:cyclone} address this
with \textsc{Cyclone}, a \emph{heterogeneous} tool that combines several
fast semi-decision procedures with a complete decision
procedure as a fallback, achieving high coverage and large speed-ups on
benchmarks harvested from \textsc{Cyclist}. This work is orthogonal to,
but illuminating for, circular induction: it engineers exactly the
global check that our use of frozen equations and the implicit ordering
of experiments lets us \emph{avoid}, at the price of the more
restricted (but cheaper) local discipline we adopt.

\paragraph{Cyclic equational reasoning (\textsc{CycleQ}).}
The closest modern analogue of circular induction is arguably
\textsc{CycleQ} by Jones, Ong and Ramsay~\cite{ci:cycleq}, a cyclic
proof system for automated \emph{equational} reasoning about pure
functional programs, in which cyclic proof and equational reasoning are
mediated by contextual substitution as a cut rule. Like circular
induction, \textsc{CycleQ} replaces the explicit induction rule by
cyclic unfolding over an equational goal, and the authors explicitly
position it as subsuming ``inductionless induction'', ``rewriting
induction'' and ``proof by consistency''---the same classical lineage
to which we compared circular induction above. Its soundness is checked
incrementally with the \emph{size-change principle} (rather than a
B\"uchi-automaton inclusion test), and it is implemented as a GHC
plugin. The conceptual overlap with our work is considerable; the main
differences are that \textsc{CycleQ} targets a typed functional language
with a size-change soundness check and contextual-substitution cut,
whereas circular induction targets order-sorted algebraic
specifications, derives soundness from the experiment ordering, and is
embedded in the same engine as circular coinduction.

\paragraph{Inductive theorem proving in Maude (\textsc{NuITP}).}
Within the same Maude/rewriting-logic ecosystem as CIRC, Dur\'an,
Escobar, Meseguer and Sapi\~na~\cite{ci:nuitp} have developed
\textsc{NuITP}, a next-generation inductive theorem prover based on
inductive order-sorted first-order logic. \textsc{NuITP} targets the
same equational programs as circular induction but emphasises
expressiveness---types and subtypes, conditional equations and
rewriting modulo associativity, commutativity and identity---and
scalability, leveraging Maude's modern symbolic infrastructure (variant
narrowing and unification, variant satisfiability, order-sorted
congruence closure, equality predicates). Its central inference rule,
Generator-Set Induction, is an explicit, constructor-based induction
scheme: it requires a user-supplied recursive path ordering (RPO) for
termination and manages explicit induction hypotheses, in contrast to
circular induction, which uses no explicit well-founded ordering and
discovers its hypotheses (frozen equations) during proof search. The
two thus represent the explicit-induction and circular-induction
strategies, respectively, instantiated over very similar order-sorted
equational theories, and the comparison highlights what each style buys:
\textsc{NuITP} gains modulo-AC(U) reasoning, decision procedures and
scalability; circular induction gains a minimal two-rule calculus and
free interoperation with circular coinduction.

\begin{table}[th]
\centering
\caption{Circular induction in the context of cyclic / infinite-descent
provers. ``Implicit order'' means soundness rests on a well-founded
order that is never computed explicitly.}
\label{tab:cyclic-comparison}
\small
\setlength{\tabcolsep}{4pt}
\begin{tabular}{@{}p{2.6cm}p{2.1cm}p{2.8cm}p{2.6cm}p{1.4cm}p{2.2cm}@{}}
\toprule
\textbf{System/paper} &
\textbf{Paradigm} &
\textbf{Logic/domain} &
\textbf{Soundness mechanism} &
\textbf{Explicit order?} &
\textbf{Ind.+Coind.?} \\
\midrule
\textbf{Circular induction} (this paper, 2010) &
Circular (cyclic) &
Order-sorted equational specs &
Frozen eqs.\ + implicit order on experiments &
No &
Yes (with circular coind.) \\
\addlinespace
Brotherston--Simpson \cite{ci:brotherston-simpson} (LICS'07) &
Cyclic / explicit (both) &
FO logic + inductive predicates &
Global trace condition (infinite descent) &
Implicit (trace) &
Induction only \\
\addlinespace
\textsc{Cyclist} \cite{ci:cyclist} (APLAS'12) &
Cyclic, generic &
FO ind.\ preds.; sep.\ logic; termination &
Trace via B\"uchi-automata inclusion &
Implicit (trace) &
Induction only \\
\addlinespace
\text{Cyclic~abduction} /\;\textsc{Caber} \cite{ci:cyclic-abduction} (SAS'14) &
Cyclic + abduction (synthesis) &
Separation logic, \texttt{while} programs &
Safety/termination trace condition &
Implicit (trace) &
Induction only \\
\addlinespace
Cohen--Rowe \cite{ci:cohen-rowe} (IJCAR'20) &
Non-well-founded / cyclic &
Transitive (co-)closure logic &
Trace condition; sound \& complete &
Implicit (trace) &
\textbf{Yes} (closure/co-closure) \\
\addlinespace
\textsc{Cyclone} \cite{ci:cyclone} (TACAS'25) &
Decision procedures for cyclic proofs &
Trace graphs (logic-agnostic) &
Heterogeneous (semi-)decision of descent &
n/a &
n/a \\
\addlinespace
\textsc{CycleQ} \cite{ci:cycleq} (PLDI'22) &
Cyclic, equational &
Typed pure functional programs &
Size-change principle; contextual-subst.\ cut &
Implicit (size-change) &
Induction only \\
\addlinespace
\textsc{NuITP} \cite{ci:nuitp} (PPDP'24\,/\,2025) &
Explicit induction &
Order-sorted eq.\ logic, modulo AC(U) &
User-supplied RPO + induction hyps. &
\textbf{Yes} (RPO) &
Induction only \\
\bottomrule
\end{tabular}
\end{table}

\subsection{Summary}

Table~\ref{tab:cyclic-comparison} summarises the comparison along the
dimensions that matter most for circular induction: the proof paradigm,
the logic/domain, the mechanism that justifies soundness, whether an
explicit well-founded order is required, what (if anything) is
discovered during the proof, whether the framework is generic, and
whether induction and coinduction are unified.

\section{Conclusion}
\label{sec:concl}

We have shown that the Circularity Principle, originally used to justify
circular coinduction, also underlies a sound inductive proving technique,
circular induction. Presenting the two techniques at a generic level,
parametric in the basic entailment relation, exposes them as dual
instances of one principle: derivatives arise from constructors rather
than destructors, experiments are ground substitutions rather than
contexts, and freezing inhibits the use of a hypothesis ``over'' a
substitution rather than ``underneath'' a context. This duality let us
reuse much of the metatheory of circular coinduction and, at the level of
implementation, let the two engines in CIRC share machinery---so that
enhancements such as simplification rules, generalization, or special
hypotheses benefit both. The resulting calculus is deliberately minimal:
two inference rules, no explicit well-founded ordering, and no separate
algorithm for computing cover or test sets, with the required inductive
content (auxiliary lemmas and the analogue of a test set) discovered on
the fly as the derivation graph is unrolled.

Because coinductive behavioral equality and inductive equality share the
same abstract structure based on experiments, they also share the same
complexity: inductive equality is $\Pi^0_2$-complete, so no procedure can
be complete for proving \emph{and} disproving inductive conjectures. In
particular, the present version of circular induction has no explicit
refutation mechanism, and a non-terminating derivation may indicate either
that the goal is false or merely that the system was unable to prove it.

Several directions remain open. The combination of induction and
coinduction illustrated in Section~5 is not yet fully mechanized: to
automate it, the tool must be able to use inductive reasoning as a basic
entailment relation for circular coinduction. Adding a complementary
disproving or refutation procedure would let circular induction reject
false conjectures rather than merely fail to prove them.


\end{document}